\newcommand{\exclude}[1]{}
\renewcommand{\Omega}{\varOmega}
\newcommand{\R}{\mathbbm{R}}
\newcounter{commentcounter}
\long\def\symbolfootnote[#1]#2{\begingroup%
\def\thefootnote{\fnsymbol{footnote}}\footnote[#1]{#2}\endgroup}
\newcommand\figref{Figure~\ref}
\newcommand\tabref{Table~\ref}
\newcommand{\PP}{\mathbb{P}}
\title{Maximizing Influence in Social Networks: A Two-Stage Stochastic Programming Approach That Exploits Submodularity}
\renewcommand{\baselinestretch}{1.5}
\keywords{social networks; independent cascade; linear threshold; influence maximization; stochastic programming; submodularity}
\begin{document}

\maketitle \centerline{\today}

\begin{abstract}
We consider stochastic influence maximization problems arising in social networks. In contrast to existing studies that involve greedy approximation algorithms with a 63\% performance guarantee, our work focuses on solving the problem optimally. To this end, we introduce a new class of problems that we refer to as two-stage stochastic submodular optimization models. We propose a delayed constraint generation algorithm to find the optimal solution to this class of problems with a finite number of samples. The influence maximization problems of interest are special cases of this general problem class.  We show that the submodularity of the influence function can be exploited to develop strong  optimality cuts that are more effective than the standard optimality cuts available in the literature. Finally, we report our computational experiments with large-scale real-world datasets for two fundamental influence maximization problems, independent cascade and linear threshold, and show that our proposed algorithm outperforms the greedy algorithm. 
\end{abstract}

\renewcommand{\baselinestretch}{1.5}

\section{Introduction}

The exploding popularity of  social  networking services, such as Facebook, LinkedIn, Google+ and Twitter, has led to an increasing interest in the effective use of word-of-mouth to market products or brands to consumers. A few individuals, seen as influencers, are targeted with free merchandise, exclusive deals or new information on a product or brand.   Marketers hope that these key influencers promote the product to others in their social network through status updates, blog posts or online reviews and that this information propagates throughout the social network from peers to peers of peers until the product ``goes viral." Therefore, a key question for marketers with limited budgets and resources is to identify a small number of individuals whom to target with promotions and relevant information so as to instigate a cascade of peer influence, taking into account the network effects.



\subsection{Literature Review}

\cite{DR01} first introduce the problem of finding which customers to target to maximize the  spread of their influence in the social network. The authors propose a Markov random-field-model of the social network,  where the probability that a customer is influenced takes into account whether her connections are influenced. After building this network, the authors propose several heuristics to identify which $k$ individuals to target in a viral marketing campaign,  where $k$ is a user-defined positive integer.  
\cite{KKT03} formalize the optimization problem and introduce two fundamental  models to maximize the influence spread in a social network: the {\it independent cascade} model and the {\it linear threshold} model.  The authors  show that the optimization problems are NP-hard, assuming that there is an efficient oracle to compute the influence spread function.  This seminal work spurred a flurry of research on social networks with over 3400 citations.      \cite{WCW12} show that calculating the influence spread function is  \#P-hard under the probabilistic assumptions of \cite{KKT03}. Therefore, the independent cascade problem is  \#P-hard, and there are  two sources of difficulty. First, the calculation of the influence spread function is hard because there is an exponential number of scenarios. This difficulty is overcome by using sampling. Second, the seed selection is combinatorial in nature, and requires the evaluation of an exponential number of choices. This difficulty is overcome by seeking heuristic solutions in the literature.  We describe the results of the seminal paper by \cite{KKT03} and the subsequent developments in Section \ref{sec:greedy}.

The existing work on optimization-based methods for social network analysis focus on various aspects other than influence maximization \cite[see the review by][]{XABP09}. The first class of problems studied is that of identifying the influential nodes of a network with respect to the nodes' centrality and connectivity. As an example of this class of problems, \cite{ACEP09} propose an integer programming formulation for the problem of identifying $k$ nodes whose removal from a {\it deterministic} social network causes maximum fragmentation (disconnected components). The second class is that of clustering the nodes of a {\it deterministic} social network to identify the cohesive subgroups of the network. For example, \cite{BBH11} and \cite{EVB16} utilize  optimization models to identify clique relaxations. Third, game-theoretic approaches are used to study various aspects of social networks, such as modeling competitive marketing strategies of two firms to maximize their market shares \cite[see, e.g.,][]{BOY16}. 

In contrast to these models, we focus on the stochastic influence maximization problems and propose a two-stage stochastic programming method. In addition, by utilizing the submodularity of the second stage value (objective) function, we develop effective decomposition algorithms. Two-stage stochastic programming is a versatile modeling tool for decision-making under uncertainty. In the first stage, a set of decision needs to be made when some parameters are random. In the second stage, after the uncertain parameters are revealed, a second set of (recourse) decisions are made so that the expected total cost is minimized.  We refer the reader to
 \cite{BL97} and \cite{SDR09} for an overview of stochastic (linear) programming. 
 To the best of our knowledge, \cite{SD14} provide the only  study besides ours that uses a stochastic programming approach to solve a problem in social networks. In this paper, the authors consider the problem of protecting some arcs of a social network (subject to a limited budget) so that the damage caused by the spread of rumors from their sources to a set of targeted nodes is minimized. 

\subsection{Our contributions}

Despite the ubiquity of social networks, there has been a paucity of research in finding provably optimal solutions to the two fundamental problems of maximizing influence in social networks (independent cascade and general threshold). The algorithms studied to date are approximation algorithms with a worst-case guarantee within 63\% optimal \cite[][and references therein]{KKT15}. The proposed heuristics are tested on real social networks and compared to other simple heuristics. However, their practical performance has not been tested against the optimal solution due to the hardness of the problem and the unavailability of an algorithm that can find the optimal solution for large-scale instances of the problem. To fill this gap, we introduce a new class of problems that we refer to as two-stage stochastic submodular optimization models. We propose a delayed constraint generation algorithm to find the optimal solution to this class of problems with a finite number of samples. The proposed delayed constraint generation algorithm  exploits the submodularity of the second-stage value function. The influence maximization problems of interest are special cases of this general problem class.   Utilizing the special structure of the influence function, we give an explicit characterization of  the cut coefficients of the submodular inequalities, and  identify conditions under which they are facet-defining for the full master problem that is solved by delayed constraint generation. This leads to a  more efficient implementation of the proposed  algorithm than is available from a textbook implementation of available algorithms for this class of problems \cite[]{B62,VW69,NW81}. In addition, we give the complete linear description of the master problem for $k=1$. We illustrate our proposed algorithm on  the
classical {\it independent cascade} and {\it linear threshold} problems \cite[]{KKT03}. In our computational study, we show that our algorithm yields solutions with 36\% higher optimality guarantees, much faster than the greedy heuristic in most of the large-scale real-world  instances.

We  note that while we demonstrate our algorithms on the independent cascade and linear threshold models, our approach is more generally applicable to many other variants of the influence maximization problem studied previously in the literature. Furthermore, beyond social networks, there are other applications of identifying a few key nodes in complex networks for which our models are applicable. For example, \cite{OS04} consider the problem of locating costly sensors on the crucial junctures of the water distribution network to ensure water quality and safety by the early detection and prevention of outbreaks.  The models could also be useful in the development of immunization strategies in epidemic models \cite[see, e.g.][]{MKCBH04},  and prevention of cascading failures in power systems \cite[see, e.g.,][]{HBS09}.  Furthermore, it also applies to more general stochastic optimization problems that have submodular second-stage value functions. For example, recently \cite{CF14} consider a {\it deterministic} hub location problem, and prove that the routing costs in the objective function  are submodular. Using this observation, the authors employ the delayed constraint generation algorithm of \cite{NW81}  to solve the optimization problem more effectively than the existing models for this problem. Our proposed algorithm can be used to solve a {\it stochastic} extension of the hub location problem, where in the first stage, the hub locations are determined, and in the second stage, after the revelation of uncertain demand of multiple commodities, the optimal routing decisions are made. Hence, the general two-stage stochastic submodular optimization model and method  that we introduce in Section \ref{sec:benders} has a potential broader impact beyond social networks. 

\subsection{Outline} In Section \ref{sec:greedy}, we formally introduce the influence maximization problem and review the greedy algorithm of \cite{KKT03}. In Section \ref{sec:benders}, we define a  general two-stage stochastic submodular optimization model, and describe a delayed constraint generation  algorithm that exploits the submodularity of the second-stage value function. We show that for $k=1$, solving a linear program with a simple set of submodular optimality cuts and the cardinality restriction on the seed set guarantees an integer optimal solution.   In Section \ref{sec:iclt}, we consider the two fundamental influence maximization problems as defined by \cite{KKT03}, namely {\it independent cascade} and {\it linear threshold}. We show that for these special cases of the two-stage stochastic submodular optimization problems, we can obtain an explicit form of the submodular optimality cuts and identify conditions under which they are facet defining. In Section \ref{sec:comp}, we report our computational experience with large-scale real-world datasets, which show the efficacy of the proposed approach in finding  optimal solutions as compared to the greedy algorithm. We share our conclusions and future work in Section \ref{sec:conc}. 

\section{Greedy Algorithm of \cite{KKT03}} \label{sec:greedy}

In this section, we describe the modeling assumptions of \cite{KKT03}, and overview the  greedy hill-climbing algorithm proposed by these authors. Suppose that we are given a social network $G=(V,A)$, where $|V|=n, |A|=m$. The vertices represent the individuals, and an arc $(i,j)\in A$ represents a potential influence relationship between individuals $i$ and $j$.   Our goal is to select a subset of {\it seed} nodes, $X\subset V$, with $|X|\le k<n$ to activate initially, so that the expected number of people influenced by $X$  (denoted by $\sigma(X)$) is maximized, where $k$ is a given integer. (Note that the original problem statement is to select exactly $k$ nodes to activate. However, for the relaxation that seeks $|X|\le k$ seed nodes that maximize influence, there exists a solution for which the inequality holds at equality.)  The influence propagation  is assumed to be {\it progressive}, in other words, once a node is activated it remains active.

\cite{KKT03} show that for various influence maximization problems, the influence function $\sigma(X)$ is nonnegative, monotone and submodular. Therefore, the influence maximization problem involves the maximization of a submodular function. The authors show that this problem is NP-hard even if there is an efficient oracle to compute the influence spread function.  However, using the results of \cite{CFN77} and \cite{NWF78} that the greedy method gives a $(1-\frac{1}{e})$-approximation algorithm for maximizing a nonnegative monotone submodular function, where $e$ is the base of the natural logarithm, \cite{KKT03} establish that the greedy hill-climbing algorithm solves the influence maximization problem with a constant (0.63) guarantee, assuming that the function  $\sigma(X)$ can be calculated efficiently.  
Recognizing the computational difficulty of calculating $\sigma(X)$ exactly, which involves taking the expectation of the influence function with respect to a finite (but  exponential) number of scenarios, \cite{KKT03} propose Monte-Carlo sampling, which provides a  subset of equiprobable scenarios, $\Lambda$,  of moderate size. Letting $\sigma_\omega$ denote the influence function for scenario $\omega\in \Lambda$, we get $\sigma(X)=\frac{1}{|\Lambda|}\sum_{\omega\in \Lambda}\sigma_\omega(X)$. The basic greedy approximation algorithm of  \cite{KKT03}  is given in Algorithm \ref{alg:greedy}.

Subsequently,  \cite{WCW12} formally show that calculating $\sigma(X)$ is  \#P-hard under the assumption of independent arc probabilities $\pi_{ij}, (i,j)\in A$. Therefore, \cite{KKT15} propose a modification where an arbitrarily good approximation of  $\sigma(X)$ is obtained in polynomial time by sampling from the true distribution. In particular,  \cite{KKT15} show that for a sample size  of   $\Omega\left(\frac{n^2}{\varepsilon^2}\ln (1/\alpha)\right)$, the average number of activated nodes over the sample is a $(1\pm \varepsilon)$-approximation to $\sigma(X)$, with probability at least $1-\alpha$. 

\begin{algorithm}\label{alg:greedy}
 \SetAlgoLined
Start with $X=\emptyset$ and a sample set of scenarios $\Lambda$\;
 \While{$|X|\le k$}
 {
 	For each node $i\in V\setminus X$, use the sample $\Lambda$ to approximate $\sigma(X\cup\{i\})$\; 
Add  node $i$ with the largest estimate for $\sigma(X\cup\{i\})$ to $X$\;
}
Output the set  of seed nodes, $X$.
\caption{Greedy Approximation Algorithm of \cite{KKT03}.}
\end{algorithm}

Further algorithmic improvements to the greedy heuristic are given in the literature \cite[see][for an overview]{KKT15,CLC13}.  Most notably, \cite{BBCL14} give a randomized algorithm for finding a $(1-1/e-\epsilon)$-approximate seed sets in $O((m+n)\epsilon^{-3}\log  n)$ time for any precision parameter  $\epsilon>0$. Note that this run time is independent of the number of seeds $k$. The authors show that the running time is close to the lower bound of $\Omega(m+n)$ on the time required to obtain  a constant factor randomized  approximation algorithm. The proposed randomized algorithm has a success probability of 0.6, and failure is detectable. Therefore, the authors suggest repeated runs if failure is detected to improve the probability of success.  


\section{A  Two-Stage Stochastic Submodular Optimization Model and Method}\label{sec:benders}

In this section, we define a general two-stage stochastic submodular optimization model and  outline a delayed constraint generation algorithm for its solution. Then, in Section \ref{sec:iclt}, we describe how this general model and method is applicable to the influence maximization problems of interest. 

Let $(\Lambda, \mathcal F, \PP)$ be a finite probability space, where the probability of an elementary event $\omega\in \Lambda$ is  $p_\omega:=\PP(\omega)$.   Consider a general two-stage stochastic binary  program
\begin{subequations}\label{model:2sso}
\begin{align}
\max~~& c^\top x +\sum_{\omega\in \Lambda} p_\omega \sigma_\omega(x)\\
\text{s.t.}~~& x\in \mathcal X  \\
& x\in\{0,1\}^n,
\end{align}
\end{subequations}
where $c\in \R^n$ is a given objective vector, the set $\mathcal X$ represents the constraints on the first-stage variables $x$ and $\sigma_\omega(x)$ is the objective function of the second-stage problem for scenario $\omega\in \Lambda$ solved as a function of first-stage decisions given by 
\begin{subequations}\label{model:gensub}
\begin{align}
\sigma_\omega(x):=\max~~& q^\top y\\
\text{s.t.}~~& y\in \mathcal Y(x,\omega).
\end{align}
\end{subequations}
Here $q$ is an objective vector of conformable dimension, $y$ is the vector of second-stage decisions, and $\mathcal Y(x,\omega)$ defines the set of feasible second-stage decisions for a given first-stage vector $x$, and the realization of the uncertain outcomes given by the scenario $\omega\in \Lambda$. 
We assume that $\sigma_\omega(x):\{0,1\}^n\rightarrow \R$ is known to be a submodular function for each $\omega\in \Lambda$, and refer to the optimization problem \eqref{model:2sso} as a {\it two-stage stochastic submodular optimization model}.  
 It is well-known from the property of submodular functions that if  $\sigma_\omega(x), \omega\in \Omega$ is submodular, then so is the second-stage value function $\sigma(x)= \sum_{\omega\in \Lambda} p_\omega \sigma_\omega(x)$, which is a nonnegative  (convex) combination of submodular functions. Furthermore, we assume that $\mathcal Y(x,\omega)$ is a non-empty set for each $x\in  \mathcal X, \omega\in \Lambda$, a property known as {\it relatively complete recourse} in stochastic programming. 

Next we overview a delayed constraint generation approach  to solve the two-stage program \eqref{model:2sso}.  The generic master problem at an iteration is formulated as 
\begin{subequations}\label{master}
\begin{align}
\max~~& c^\top x +\sum_{\omega\in \Lambda} p_\omega \theta_\omega\\
\text{s.t.}~~&x\in \mathcal X  \\
&(x,\theta)\in \mathcal C  \label{eq:optcuts}, 
\end{align}
\end{subequations}
where $\theta$ is a $|\Lambda|$-dimensional vector of variables  $\theta_\omega$  representing the second-stage objective function approximation for scenario $\omega$,  constraints \eqref{eq:optcuts} represents the so-called {\it optimality cuts}  generated until this iteration. The set of inequalities in $\mathcal C$ provide a piecewise linear approximation of the second stage value function, which is iteratively refined through the addition of  the optimality cuts.  (We will describe different forms of these inequalities in the following discussion.) Let $(\bar x,\bar \theta)$ be the optimal solution to the master problem at the current iteration. Then for all $\omega\in \Lambda$ we solve the subproblems \eqref{model:gensub} to obtain $\sigma_\omega (\bar x)$. We add  valid optimality cuts  to $\mathcal C$ if $\bar \theta_\omega>\sigma_\omega(\bar x)$ for any $\omega\in \Lambda$, otherwise we  deduce that the current solution $\bar x$  is optimal. The generic version of the delayed constraint generation algorithm is given in Algorithm \ref{alg:benders}. In this algorithm, $\varepsilon$ is a user-defined optimality tolerance.   The particular implementation of Algorithm \ref{alg:benders} depends on  the method with which subproblems are solved to obtain $\sigma_\omega(\bar x)$ (in line \ref{alg:step-solve} of Algorithm \ref{alg:benders}), and the form of the optimality cuts added to the master problem (in line \ref{alg:step-cut} of Algorithm \ref{alg:benders}). In this section, we explore the possibility of utilizing the submodularity of the second-stage value function in a two-stage stochastic programming problem.  We discuss a natural alternative in Appendix \ref{sec:app}, which we use as a benchmark.

\begin{algorithm}\label{alg:benders}
 \SetAlgoLined
Start with $\mathcal C=\{0\le  \theta_\omega\le n, \omega\in \Lambda \}$. Let LB=0 and UB=$n$\;
 \While{${\text{UB}-LB}\le \varepsilon$}
 {
 Solve the  master problem \eqref{master} and obtain $(\bar x, \bar \theta)$. Let UB be the upper bound obtained from the optimal objective value of the master problem\;
\For{$\omega\in \Lambda$}
{\do
Solve Subproblem \eqref{model:gensub} to obtain   $\sigma_\omega(\bar x)$   \label{alg:step-solve}\; 
\If{$\bar \theta_\omega>\sigma_\omega (\bar x)$}
{Add an optimality cut to $\mathcal C$\; \label{alg:step-cut}
}
}
Let $\sigma(\bar x)=\sum_{\omega\in \Lambda}p_\omega \sigma_\omega (\bar x)$.
 \If{LB $<\sigma(\bar x)$}
 {Let LB  $\gets\sigma(\bar x)$, and let $\hat x\gets \bar x$ be the incumbent solution
 }
}
Output the set  of seed nodes $X=\{i\in V: \hat x_i=1\}$.
\caption{Delayed Constraint Generation Algorithm.}
\end{algorithm}

 \cite{NW81} give submodular inequalities  to describe the maximum of a submodular set function \cite[see also][]{NW88}. 
Consider the polyhedra $\mathcal S_\omega=\{(\theta_\omega,x)\in \R\times\{0,1\}^n:\theta_\omega\le \sigma_\omega(S) +\sum_{j\in V\setminus S} \rho^\omega_j(S) x_j, \forall S\subseteq V\}$,  and $ \mathcal S'_\omega=\{(\theta_\omega,x)\in \R\times\{0,1\}^n:\theta_\omega\le \sigma_\omega(S) -\sum_{j\in S} \rho^\omega_j(V\setminus\{j\})(1-x_j) +\sum_{j\in V\setminus S} \rho^\omega_j(S) x_j, \forall S\subseteq V\}$ for $\omega \in \Lambda$, where $\rho_j^\omega(S)=\sigma_\omega(S\cup\{j\})-\sigma_\omega(S)$ is the marginal contribution of adding $j\in V\setminus S$ to the set $S$.  

\begin{theorem} \cite[cf.][] {NW81} \label{thm:submodeq} 
For a submodular and nondecreasing set function $\sigma_\omega:2^n\rightarrow \R$, $\bar X$, with a characteristic vector $\bar x$, is an optimal solution to $\max_{S\subseteq V: |S|\le k} \{\sigma_\omega(S)\}$, if and only if $(\theta_\omega, \bar x)$ is an optimal solution to $\{\max \  \theta_\omega : \sum_{j\in V} x_j\le k, {(\theta_\omega,x)\in \mathcal S_\omega} \}$.  Similarly for a submodular and nonmonotone set function $\sigma_\omega:2^n\rightarrow \R$, $\bar X$, with a characteristic vector $\bar x$, is an optimal solution to $\max_{S\subseteq V: |S|\le k} \{\sigma_\omega(S)\}$, if and only if $(\theta_\omega, \bar x)$ is an optimal solution to $\{\max \  \theta_\omega : \sum_{j\in V} x_j\le k, {(\theta_\omega,x)\in  \mathcal  S'_\omega} \}$.
\end{theorem}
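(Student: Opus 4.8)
The plan is to reduce the claimed equivalence to two structural properties of the set $\mathcal{S}_\omega$ (respectively $\mathcal{S}'_\omega$), after which both directions of the ``if and only if'' follow by a short comparison of optimal values. Throughout, for a set $X\subseteq V$ let $x^X$ denote its characteristic vector; since every feasible $x$ in the problem $\{\max \theta_\omega : \sum_{j\in V} x_j\le k,\ (\theta_\omega,x)\in\mathcal{S}_\omega\}$ is binary, it equals $x^X$ for a unique $X$ with $|X|\le k$, so I only ever need to reason about integral points. The two properties I will establish are: (P1) \emph{validity}, namely $(\sigma_\omega(X),x^X)\in\mathcal{S}_\omega$ for every $X\subseteq V$; and (P2) \emph{tightness}, namely every feasible point of the form $(\theta_\omega,x^X)$ satisfies $\theta_\omega\le\sigma_\omega(X)$.

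Property (P2) is the easy half: I instantiate the defining inequality of $\mathcal{S}_\omega$ at the particular set $S=X$ and observe that the summation $\sum_{j\in V\setminus X}\rho_j^\omega(X)\,x^X_j$ vanishes because $x^X_j=0$ for all $j\in V\setminus X$, leaving exactly $\theta_\omega\le\sigma_\omega(X)$. For property (P1) I must show that for all $S,X\subseteq V$ one has $\sigma_\omega(X)\le\sigma_\omega(S)+\sum_{j\in X\setminus S}\rho_j^\omega(S)$. Writing $X\setminus S=\{j_1,\dots,j_r\}$, I would telescope $\sigma_\omega(S\cup X)-\sigma_\omega(S)$ as $\sum_{i=1}^r\rho_{j_i}^\omega(S\cup\{j_1,\dots,j_{i-1}\})$, bound each term above by $\rho_{j_i}^\omega(S)$ using submodularity (diminishing marginal returns along the nested sets $S\subseteq S\cup\{j_1,\dots,j_{i-1}\}$), and finally invoke monotonicity to replace $\sigma_\omega(S\cup X)$ on the left by the smaller-or-equal $\sigma_\omega(X)$. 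Evaluating the resulting inequality at $x=x^X$ is precisely the defining inequality of $\mathcal{S}_\omega$ indexed by $S$.

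With (P1) and (P2) in hand the equivalence is immediate. By (P2) the objective of the integer program is at most $\sigma_\omega(X)\le\max_{S:|S|\le k}\sigma_\omega(S)$ on any feasible point, so its optimal value is bounded above by $\max_{S:|S|\le k}\sigma_\omega(S)$; by (P1), for any maximizer $\bar X$ of the submodular problem the point $(\sigma_\omega(\bar X),\bar x)$ is feasible and attains this bound. Hence the optimal value equals $\max_{S:|S|\le k}\sigma_\omega(S)$ and is attained exactly at the characteristic vectors of the submodular maximizers. Reading this off gives both directions: if $\bar X$ is optimal for the submodular problem then $(\sigma_\omega(\bar X),\bar x)$ is optimal for the integer program, and conversely if $(\theta_\omega,\bar x)$ is optimal for the integer program then (P2) applied with $S=\bar X$ forces $\theta_\omega=\sigma_\omega(\bar X)$ to equal the common optimal value, so $\bar X$ maximizes $\sigma_\omega$ subject to the cardinality constraint.

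The nonmonotone case with $\mathcal{S}'_\omega$ follows the same two-step template; only the validity step (P1) changes. There I would instead establish the general submodular inequality $\sigma_\omega(X)\le\sigma_\omega(S)+\sum_{j\in X\setminus S}\rho_j^\omega(S)-\sum_{j\in S\setminus X}\rho_j^\omega(V\setminus\{j\})$, which does not require monotonicity. The cleanest route is to first derive the tight Nemhauser--Wolsey form, in which the last sum carries the marginals $\rho_j^\omega(S\cup X\setminus\{j\})$, and then weaken it using submodularity via $\rho_j^\omega(V\setminus\{j\})\le\rho_j^\omega(S\cup X\setminus\{j\})$ (again diminishing returns, since $S\cup X\setminus\{j\}\subseteq V\setminus\{j\}$), which only enlarges the right-hand side and hence preserves validity. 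Property (P2) is unchanged, because at $S=X$ both correction sums are empty and the inequality again collapses to $\theta_\omega\le\sigma_\omega(X)$. I expect this validity step, and in particular getting the directions of the submodularity inequalities right in the nonmonotone reduction, to be the only real obstacle; the bookkeeping that turns (P1) and (P2) into the stated equivalence is routine once the restriction to binary $x$ is noted.
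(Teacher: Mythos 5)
Your proposal is correct. Note, however, that the paper itself offers no proof of this statement: it is imported from Nemhauser and Wolsey (1981) with a ``cf.'' citation, and validity of the cuts \eqref{eq:gen-nd}--\eqref{eq:gen-nm} is explicitly deferred to that reference. Your reconstruction is essentially the classical Nemhauser--Wolsey argument: tightness of the inequality indexed by $S=X$ at binary points (your (P2)), validity of all inequalities at $\bigl(\sigma_\omega(X),x^X\bigr)$ (your (P1)) --- proved in the monotone case by telescoping the marginals along $X\setminus S$ and using diminishing returns plus monotonicity, and in the nonmonotone case by weakening the tight Nemhauser--Wolsey--Fisher inequality via $\rho_j^\omega(V\setminus\{j\})\le\rho_j^\omega\bigl(S\cup X\setminus\{j\}\bigr)$ --- followed by the routine comparison of optimal values. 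All inequality directions are handled correctly, including the subtle weakening step in the $\mathcal S'_\omega$ case, so the argument stands as a self-contained proof of the cited theorem.
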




Therefore, we can adapt the delayed constraint generation algorithm of \cite{NW88}  given for deterministic submodular maximization problems to {\it}  two-stage stochastic submodular optimization problems.   
The proposed method takes the form of Algorithm \ref{alg:benders}. For a given first stage solution, $\bar x$, which is a characteristic vector of the set $\bar X$, and scenario $\omega\in \Lambda$, we use  the optimality cut 
\begin{equation}\label{eq:gen-nd}
\theta_\omega\le \sigma_\omega(\bar X) +\sum_{j\in V\setminus \bar X} \rho^\omega_j(\bar X) x_j,
\end{equation}
if the second-stage value function $\sigma_\omega(x)$ is nondecreasing and submodular. 
 If the second-stage value function $\sigma_\omega(x)$ is nonmonotone and submodular, then we use the optimality cut given by 
 the inequality
\begin{equation}\label{eq:gen-nm}
\theta_\omega\le \sigma_\omega(\bar X) -\sum_{j\in \bar X} \rho^\omega_j(V\setminus\{j\})(1-x_j) +\sum_{j\in V\setminus \bar X} \rho^\omega_j(\bar X) x_j.
\end{equation}
We refer the reader to \cite{NW81} for validity of inequalities \eqref{eq:gen-nd}-\eqref{eq:gen-nm}.
\begin{corollary} \label{prop:finconv}
Algorithm \ref{alg:benders} with optimality cuts \eqref{eq:gen-nd} and \eqref{eq:gen-nm} converges to an optimal solution in finitely many iterations   for a  two-stage stochastic program with binary first-stage decisions, $x\in\{0,1\}^{|V|}$ for which the second-stage value function, $\sigma_\omega(x), \omega\in \Lambda$, ($|\Lambda|$ finite) is submodular nondecreasing and submodular nonmonotone, respectively.
\end{corollary}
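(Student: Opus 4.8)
The plan is to treat the corollary as a finite-convergence result for the delayed constraint generation scheme of Algorithm~\ref{alg:benders}, resting on three facts: that only finitely many distinct cuts can ever be produced, that validity of \eqref{eq:gen-nd}--\eqref{eq:gen-nm} makes each master problem a relaxation of \eqref{model:2sso}, and that a cut, once added, blocks its own regeneration. The two regimes in the statement---$\sigma_\omega$ submodular nondecreasing (handled by \eqref{eq:gen-nd}) and submodular nonmonotone (handled by \eqref{eq:gen-nm})---are identical at the level of this argument, differing only in which family of cuts is used; I would therefore carry out the reasoning once and indicate the common steps for both.

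First I would record the counting bound. Since $x\in\{0,1\}^{|V|}$ and $|\Lambda|<\infty$, every cut \eqref{eq:gen-nd} (resp.\ \eqref{eq:gen-nm}) is indexed by a pair $(\bar X,\omega)$, with $\bar X\subseteq V$ the support of the master solution and $\omega\in\Lambda$; hence at most $2^{|V|}|\Lambda|$ distinct cuts exist. The crucial observation---the no-repetition lemma---is obtained by evaluating the cut generated at $\bar x$ for scenario $\omega$ back at the point $x=\bar x$. In \eqref{eq:gen-nd} the term $\sum_{j\in V\setminus\bar X}\rho^\omega_j(\bar X)\,\bar x_j$ vanishes because $\bar x_j=0$ for $j\in V\setminus\bar X$, and in \eqref{eq:gen-nm} the additional term $\sum_{j\in\bar X}\rho^\omega_j(V\setminus\{j\})(1-\bar x_j)$ vanishes because $\bar x_j=1$ for $j\in\bar X$; in both cases the cut collapses to $\theta_\omega\le\sigma_\omega(\bar X)$. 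Consequently, whenever the master solution equals $\bar x$ at a later iteration, the already-present cut forces $\bar\theta_\omega\le\sigma_\omega(\bar X)=\sigma_\omega(\bar x)$, so the violation test $\bar\theta_\omega>\sigma_\omega(\bar x)$ in line~\ref{alg:step-cut} fails for that $(\bar X,\omega)$ and no duplicate is added.

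Next I would assemble finiteness and optimality. Any iteration that does not terminate adds at least one cut that, by the previous paragraph, has never been generated before; since the pool of candidate cuts is finite, the algorithm performs finitely many iterations. Upon termination no violation remains, so $\bar\theta_\omega\le\sigma_\omega(\bar x)$ for all $\omega\in\Lambda$. Validity of \eqref{eq:gen-nd}--\eqref{eq:gen-nm} (as in \cite{NW81}, underlying Theorem~\ref{thm:submodeq}) means that $(\sigma_\omega(x),x)$ satisfies every cut for each feasible $x$, so the master is a relaxation and its optimal value $\text{UB}$ bounds from above the optimal value $\OPT$ of \eqref{model:2sso}. Combining these with the feasibility of $\bar x$ gives the cycle
\begin{equation*}
c^\top\bar x+\sum_{\omega\in\Lambda}p_\omega\sigma_\omega(\bar x)\ \le\ \OPT\ \le\ \text{UB}\ =\ c^\top\bar x+\sum_{\omega\in\Lambda}p_\omega\bar\theta_\omega\ \le\ c^\top\bar x+\sum_{\omega\in\Lambda}p_\omega\sigma_\omega(\bar x),
\end{equation*}
which forces equality throughout, so the incumbent $\bar x$ attains $\OPT$.

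The argument is a specialization of the standard L-shaped/Benders convergence proof, so I expect no deep obstacle; the one step meriting care is the no-repetition lemma, since it is exactly here that the precise algebraic form of the cut coefficients and the binary (hence $0/1$) structure of $\bar x$ are used---without the vanishing of the variable terms at the generating point one could not rule out cycling through the same master solution indefinitely. I would also note in passing that the master remains bounded and feasible throughout (the initial bounds $0\le\theta_\omega\le n$ together with nonempty $\mathcal X$ guarantee this), which is needed so that each iteration, and hence the sandwiching argument at termination, is well defined.
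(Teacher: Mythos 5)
Your proof is correct and takes essentially the same route as the paper: the paper's own proof is a single sentence invoking the finiteness of the set of binary first-stage solutions together with Theorem \ref{thm:submodeq}, which is exactly the argument you spell out in full. The only difference is one of detail---you make explicit the two facts the paper leaves implicit (a cut evaluated at its generating point collapses to $\theta_\omega\le\sigma_\omega(\bar X)$, so no pair $(\bar X,\omega)$ can generate a cut twice, and validity of \eqref{eq:gen-nd}--\eqref{eq:gen-nm} makes each master a relaxation, so termination forces optimality via the sandwich), and in doing so you use only the validity of the cuts rather than the full equivalence stated in Theorem \ref{thm:submodeq}.
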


\begin{proof}
The result follows from the fact that the number of feasible first stage solutions is finite, and from Theorem \ref{thm:submodeq}. 
\end{proof}

Note that Algorithm \ref{alg:benders} is  generally applicable to two-stage stochastic programs with binary first-stage decisions, $x\in\{0,1\}^{n}$, where the second-stage value function, $\sigma_\omega(x)$ is submodular for all $\omega\in \Lambda$.  There is very limited reporting on the  computational performance of this algorithm even for deterministic submodular maximization problems  for which the method was originally developed \cite[see][for computational results on quadratic cost partition and hub location problems, respectively]{LNW96,CF14}. To the best of our knowledge, our work is  the first adaptation and testing of this algorithm to stochastic optimization. While the submodular inequalities \eqref{eq:gen-nd}-\eqref{eq:gen-nm} are implicit in that they require the calculation of $\rho^\omega_j(\cdot)$ terms, in Section \ref{sec:iclt}, we give an explicit form of the submodular optimality cuts for influence maximization problems of interest. This allows us to characterize conditions under which the optimality cuts are strong, and to improve the performance of a textbook implementation of the algorithm of \cite{NW81}.


\exclude{
Inequalities \eqref{eq:gen-nd} and \eqref{eq:gen-nm} require the calculation of the marginals $\rho^\omega_j$ for deterministic subproblems.  
For example,  \cite{W89} give implicit submodular inequalities for capacitated fixed-charge network flow problems, where the marginals are determined by maximum flow problems in an appropriate network. For structured graphs, \cite{ATK15} give efficient algorithms to calculate the marginals.
}

\subsection*{Convex Hull for a  Special Case}
Next, we consider the special case of cardinality-constrained first-stage problem \eqref{model:2sso}, i.e., $\mathcal X:=\{x\in\{0,1\}^n: \sum_{j\in V}x_j\le k\}$, when $k=1$. It is easy to see  that in this case, the greedy algorithm is optimal. Note also that for fixed $k$, the problem is polynomially solvable (with respect to the input size of number of nodes, arcs and scenarios), because it involves evaluating $O(n^k)$  possible functions $\sigma_\omega(X), \omega\in \Lambda$.  
Observe that, without loss of generality, we can assume that $p_\omega >0$ for all $\omega\in \Lambda$ (otherwise, we can ignore scenario $\omega$), and that $\sigma_\omega(\emptyset)=0$ (otherwise, we can add a constant to the influence function). Furthermore, because  $\sigma_\omega(\cdot)$ is submodular $\rho_j^\omega(\emptyset)\ge \rho_j^\omega(S)$ for any $S\subseteq V, S\ne \emptyset$ and $j\in V\setminus S$. As a result, if $\rho_j^\omega(\emptyset)<0$, then $x_j=0$ in any optimal solution. Therefore, without loss of generality, we can assume that $\rho^\omega_j(\emptyset)\ge 0$ for all $j\in V, \omega\in \Lambda$.

\begin{proposition}\label{prop:k=1}
	For  submodular  functions $\sigma_\omega(x), \omega\in \Lambda$ with $\rho^\omega_j(\emptyset)>0$ for all $j\in V, \omega\in \Lambda$, and  $\mathcal X:=\{x\in\{0,1\}^n: \sum_{j\in V}x_j\le 1\}$,  adding the submodular optimality cut \eqref{eq:gen-nd} with $\bar X = \emptyset$ to the linear programming (LP) relaxation of the master problem \eqref{master} for each scenario is sufficient to give the (integer) optimal solution $x^*$.   
\end{proposition}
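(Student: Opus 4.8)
The plan is to show that, once the single inequality \eqref{eq:gen-nd} with $\bar X=\emptyset$ is added for every scenario, the LP relaxation of \eqref{master} over $\mathcal X=\{x\ge 0:\sum_{j\in V}x_j\le 1\}$ reduces to a linear program over a simplex whose vertices are all $0/1$, and that its optimal value coincides with that of the true two-stage problem. First I would simplify the cut using the stated normalizations: since $\sigma_\omega(\emptyset)=0$ we have $\rho^\omega_j(\emptyset)=\sigma_\omega(\{j\})-\sigma_\omega(\emptyset)=\sigma_\omega(\{j\})$, so \eqref{eq:gen-nd} with $\bar X=\emptyset$ becomes $\theta_\omega\le\sum_{j\in V}\sigma_\omega(\{j\})\,x_j$ for each $\omega\in\Lambda$. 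I would then note two redundancies: the bounds $x_j\le 1$ are implied by $x\ge 0$ and $\sum_{j}x_j\le 1$, and the initial bound $\theta_\omega\le n$ is dominated by the new cut, because $\sum_{j}\sigma_\omega(\{j\})x_j\le(\max_{j}\sigma_\omega(\{j\}))\sum_{j}x_j\le n$ (using $x\ge 0$, $\sum_j x_j\le1$, and $\sigma_\omega(\{j\})\le n$, which must hold wherever $\theta_\omega\le n$ is a valid bound). Consequently the projection of the feasible region onto $x$ is exactly the simplex $\Delta=\{x\ge 0:\sum_{j\in V}x_j\le 1\}$.

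Next, since each $p_\omega>0$ and $\theta_\omega$ appears with this positive coefficient in the objective, every optimal solution must raise each $\theta_\omega$ to its only remaining upper bound, i.e.\ $\theta_\omega=\sum_{j\in V}\sigma_\omega(\{j\})x_j$. Substituting this back collapses the master LP to the pure linear program $\max\{\sum_{j\in V}d_j x_j : x\in\Delta\}$, where $d_j:=c_j+\sum_{\omega\in\Lambda}p_\omega\,\sigma_\omega(\{j\})$. The key step is the integrality of $\Delta$: its vertices are precisely the origin and the unit vectors $e_j$, $j\in V$, all of which are $0/1$ vectors, so a vertex-optimal solution of this LP is integral. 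Concretely, the optimum is $x^*=0$ when $d_j\le 0$ for all $j$, and $x^*=e_{j^*}$ for any $j^*$ attaining $\max_{j\in V}d_j$ otherwise.

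Finally, I would verify that this integral LP optimum actually solves the original two-stage problem \eqref{model:2sso}. The added inequality is valid by Theorem \ref{thm:submodeq} (it is exactly the constraint defining $\mathcal S_\omega$ at $S=\emptyset$), so the LP is a genuine relaxation and its value is an upper bound on the true optimum. Moreover, at every integer-feasible point — the origin and each $e_j$ — the cut holds at equality and reproduces the correct second-stage values $\sigma_\omega(\emptyset)=0$ and $\sigma_\omega(\{j\})$; hence the LP objective agrees with the true objective at all feasible $0/1$ points. Combining the upper bound with the integrality of the LP optimum then forces the LP optimum to equal the true optimum, which yields the claimed integer optimal solution $x^*$.

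I do not anticipate a serious obstacle in this argument; it is essentially a structural observation specific to $k=1$. The one point that warrants care is establishing that the single $S=\emptyset$ cut is \emph{simultaneously} valid and tight at every feasible integer point, since it is precisely this coincidence that makes one cut per scenario sufficient when only a single seed may be chosen, whereas for $k\ge 2$ the analogous single cut would no longer be tight at the feasible support vectors.
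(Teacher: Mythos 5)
Your proof is correct, but it proceeds by a different mechanism than the paper's. The paper argues directly about an extreme point of the LP relaxation via basis counting: since $\rho_j^\omega(\emptyset)>0$, at an optimal extreme point all $|\Lambda|$ variables $\theta_\omega$ are positive and hence basic; with $|\Lambda|+1$ constraints (the scenario cuts plus the cardinality constraint \eqref{eq:card}), at most one $x_j$ can be basic, and it must sit at $1$, giving integrality and $\theta_\omega=\sigma_\omega(\{j\})$. You instead eliminate the $\theta$ variables by noting that, because $p_\omega>0$, each $\theta_\omega$ is pushed to its cut value at optimality, which collapses the master LP to $\max\{d^\top x : x\in\Delta\}$ over the simplex $\Delta=\{x\ge 0:\sum_j x_j\le 1\}$, whose vertices are $0$ and the unit vectors; you then close the argument by observing that the cuts are valid (Theorem \ref{thm:submodeq}) and tight at every feasible integer point, so the integral LP optimum equals the true optimum. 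Your route is more elementary and in places more careful than the paper's: it explicitly handles the degenerate case $x^*=0$ (relevant when the first-stage costs $c$ can be negative, a case the paper's ``in an optimal solution $x\ne 0$'' quietly assumes away), and it makes explicit the final step---that the LP relaxation's value coincides with the true two-stage optimum---which the paper asserts in one sentence. What the paper's basis-counting argument buys is brevity and a direct structural picture of the optimal basis ($\theta_\omega=\rho_j^\omega(\emptyset)$ for the selected $j$); what yours buys is robustness and transparency, at the cost of the side condition you correctly flag, namely that the initial bounds $\theta_\omega\le n$ must be dominated by the new cuts (automatic in the influence setting, where $\sigma_\omega(\{j\})\le n$).
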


\begin{proof}
First, note that for  $\bar X=\emptyset$, inequalities  \eqref{eq:gen-nd} and \eqref{eq:gen-nm} are equivalent. Under the given assumptions,  in an optimal solution $x\ne 0$, the right-hand side  of \eqref{eq:gen-nd}  is  positive for each $\omega\in \Lambda$. Therefore, the decision variables $\theta_\omega > 0, \omega\in \Lambda$,  are  basic variables at an extreme point  optimal solution of the LP relaxation of the master problem \eqref{master}. This gives us $|\Lambda|$ basic variables,  and the number of constraints is $|\Lambda|+1$. Hence, only one decision variable  $x_j$ for some $j\in V$ can be basic, and it is equal to $1$ (due to  constraint \eqref{eq:card}), and $\theta_\omega=\rho_j^\omega(\emptyset)=\sigma_\omega(\{j\})$. Furthermore, this is the optimal solution to the master problem for the case $k=1$.	 
\end{proof}

\section{Triggering Set Technique and the Live-arc Graph Representation} \label{sec:iclt}

In this section,  we specify how the general algorithm we propose for two-stage stochastic programs with submodular second-stage value functions applies to the influence maximization problems of interest. 
 \cite{KKT03} observe that even though the stochastic diffusion process of influence spread is dynamic, because the decisions of whom to activate do not influence the probability of an individual influencing another, we may envision the process to be static and ignore the time aspect. In other words, we can generate sample paths (scenarios) of likely events for each arc, a priori. 
As a result, the decision-making  process considered by \cite{KKT03} may be viewed as a two-stage stochastic program. In the first stage, the nodes to be activated are determined. The uncertainty, represented by a finite collection of scenarios,  $\Lambda$, is revealed with respect to how  the influence spreads in the network. For each scenario $\omega\in \Lambda$, with associated probability $p_\omega$, the influence spread given the initial seed set $X$ is calculated as  $ \sigma_\omega(X)$. As a result, the expected total influence spread of the initial seed set $X$  is given by  $\sigma(X)=\sum_{\omega\in \Lambda} p_\omega \sigma_\omega(X)$. Let $x\in \{0,1\}^n$ be the characteristic vector of $X\subset V$. Where appropriate, we use $\sigma(x)$ interchangeably with $\sigma(X)$. 

As observed by \cite{KKT03}, the influence function $\sigma_\omega(X)$ is submodular and monotone (nondecreasing) for various influence maximization problems. Then  the two-stage stochastic programming formulation of the classical influence maximization problem is given by \eqref{model:2sso} where 
$c_j=0$ for al $j\in V$ and  the set $\mathcal X$ defines the cardinality constraint on the number of seed nodes given by
\begin{equation}
\sum_{j\in V} x_j \le k,  \label{eq:card}
\end{equation}
for a given $0<k<|V|$. Therefore, Algorithm \ref{alg:benders} can be used to solve the influence maximization problem. Furthermore, note that the influence functions of interest in this paper  satisfy the assumption  $\rho^\omega_j(\emptyset)>0$ for all $j\in V, \omega\in \Lambda$, because influencing only node $j$ contributes at least one node (itself) to the influence function. In addition, the first-stage problem is cardinality-constrained. Hence Proposition \ref{prop:k=1} applies to the influence functions considered in this paper. 

To model the stochastic diffusion process  and calculate the influence spread function, 
\cite{KKT03} introduce a technique that generates a finite set, $\Lambda$, of sample paths (scenarios) by tossing biased coins. The coin tosses reveal, a priori, which influence arcs are active (live). A live-arc $(i,j)$ indicates that if node $i$ is influenced during the influence propagation process, then  node $j$ is influenced by it. 
 For each scenario $\omega\in \Lambda$, with a probability of occurrence  $p_\omega$, a so-called   {\it live-arc graph}  $G_\omega=(V,A_\omega)$ is constructed, where $A_\omega$ is the set of {\it live} arcs  under scenario $\omega$.  The influence spread under  scenario $\omega\in \Lambda$, denoted as $\sigma_\omega(X)$, is then calculated as  the number of vertices reachable from $X$ in $G_\omega$. Hence, the expected influence spread function is given by  $\sigma(X)=\sum_{\omega\in \Lambda}p_\omega\sigma_\omega(X)$. 
This is referred to as the ``triggering model" or the ``triggering set technique" by \cite{KKT15}. 
The authors show the equivalence of the stochastic diffusion process of two fundamental influence maximization problems to the live-arc graph model  with respect to the final active set. In addition, \cite{KKT03} show that the influence spread in a live-arc graph representable problem is monotone and submodular under the given assumptions. As a result, our stochastic programming method applies to such problems.  Next we describe the two fundamental influence maximization problems that are live-arc representable. 

\begin{description}

\item[{\bf Independent Cascade Model:}] 
In the independent cascade model of \cite{KKT03}, it is assumed that each arc $(i,j)\in A$ of the social network $G=(V,A)$ has an associated probability of success, $\pi_{ij}$. In other words, with probability $\pi_{ij}$ individual $i$ will be successful at influencing individual $j$.  We say that an arc $(i,j)$ is {\it active} or {\it live} in this case. We generate a sample path (scenario) by tossing biased coins (with probability of $\pi_{ij}$ for each arc $(i,j)\in A$) to determine whether the arc is active/live to construct the live-arc graph. Because each arc influence probability is independent, and does not depend on which nodes are influenced,  \cite{KKT03} show that the influence maximization problem is equivalent to maximizing the expected influence function in the live-arc graph model.

\item[{\bf Linear Threshold Model:}] 
In  the linear threshold model of \cite{KKT03},  each arc $(i,j)$ in the social network $G=(V,A)$ has deterministic weight $0\le w_{ij}\le 1$, such that for all nodes $j\in V$, $\sum_{i:(i,j)\in A}w_{ij}\le 1$. In addition, each node $j\in V$ selects a threshold $\nu_j$ {\it uniformly at random}. A node is activated if sum of the weights of its {\it active} neighbors is above the thresholds, i.e., $\sum_{i:(i,j)\in A} w_{ij}x_i\ge \nu_j$.  Given the set  of initial  seed nodes, $\bar X$, the activated nodes in the set $U$ at time $t$  influence their unactivated neighbor $j$ at time $t+1$ if $\sum_{u\in U} w_{uj} \geq \nu_j $.  
	\cite{KKT03} show that the linear threshold model also has an equivalent live-arc graph representation, where every node has at most one incoming live arc. 
Each node $j \in V$   selects at most one incoming live arc $(i,j)$ with probability $w_{ij}$, or it selects no arc with probability $1-\sum_{i:(i,j)\in A}w_{ij}$. Given the seed set $\bar X$, \cite{KKT03} prove the following two are equivalent:
	\begin{enumerate}	
		\item The distribution of active nodes computed by executing the linear threshold model with starting seed set $\bar X$, and  
		\item the distribution of nodes reachable from $\bar X$ in the live-arc graph representation of the linear threshold model defined above. 	
	\end{enumerate}

\end{description}

Next, we demonstrate how the proposed  algorithm (Algorithm \ref{alg:benders}) can be applied to influence maximization problems that have a live-arc graph representation. Subsequently, we  give extensions where the proposed algorithm applies to models which are not live-arc graph representable. In such models, the form of the cuts change, but as long as the influence spread function is submodular, the proposed algorithm applies.  

\subsection{Exploiting the Submodularity of the Second-Stage Value Function for Live-Arc Graph Models}

Utilizing Theorem \ref{thm:submodeq}, we give an explicit description of the submodular inequalities for the influence maximization  problems that have live-arc graph representations. We say that a node $j$ is reachable from a set of nodes $S$, in scenario $\omega\in \Lambda$, if there exists a node $i\in S$ such that there is a directed path from $i$ to $j$ in the graph  $G_\omega=(V, A_\omega)$. It is well known that reachability can be checked in linear time with respect to the number of arcs using depth- or breadth-first search.  For  $S\subseteq V$ and $\omega\in \Lambda$, let $R(S)$ be the set of nodes  reachable from the nodes in $S$ not including the nodes in $S$, and let  $\bar R(S)$ be the set of nodes not reachable from the nodes in $S$ in the graph $G_\omega=(V, A_\omega)$. 
\begin{proposition}
For $S\subseteq V$ and $\omega\in \Lambda$ the inequality
\begin{equation}\label{eq:optcut-submod}
\theta_\omega\le \sigma_\omega(S)+\sum_{j\in \bar R(S)} r_j^\omega(S) x_j,
\end{equation}
is a valid optimality cut for the master problem \eqref{master}, where   $r_j^\omega(S)$ is the number of nodes reachable from $j\in \bar R(S)$ (including $j$) that are not reachable from any node in $S$ in $G_\omega$. 
\end{proposition}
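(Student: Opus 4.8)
The plan is to derive the explicit cut \eqref{eq:optcut-submod} directly from the generic submodular optimality cut \eqref{eq:gen-nd} by evaluating the marginal contributions $\rho_j^\omega(S)$ in closed form using the reachability structure of the live-arc graph $G_\omega$. Since the influence spread function $\sigma_\omega$ of a live-arc graph model is nondecreasing and submodular (as established by \cite{KKT03} and recalled above), inequality \eqref{eq:gen-nd} is already known to be a valid optimality cut, with validity following from \cite{NW81} and the equivalence in Theorem \ref{thm:submodeq}. It therefore suffices to show that the right-hand side of \eqref{eq:gen-nd} coincides with that of \eqref{eq:optcut-submod}; no separate validity argument is needed.

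First I would record the basic reachability facts. Writing $\sigma_\omega(S)$ for the number of vertices reachable from $S$ in $G_\omega$ (counting the trivially reachable seeds themselves), we have $\sigma_\omega(S)=|S|+|R(S)|$, since $S$ and $R(S)$ are disjoint. More generally, a vertex $v$ is reachable from a set $T$ precisely when some node of $T$ has a directed path to $v$, so the set of vertices reachable from $T_1\cup T_2$ is the union of those reachable from $T_1$ and from $T_2$. This distributivity of reachability over unions of source sets is the only structural property I will need.

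Next I would compute $\rho_j^\omega(S)=\sigma_\omega(S\cup\{j\})-\sigma_\omega(S)$ for each $j\in V\setminus S$, using the disjoint partition $V\setminus S=R(S)\cup\bar R(S)$. For $j\in R(S)$, node $j$ is already reachable from $S$, so the reachable set of $S\cup\{j\}$ equals that of $S$ and hence $\rho_j^\omega(S)=0$. For $j\in\bar R(S)$, the reachable set of $S\cup\{j\}$ is the reachable set of $S$ together with the vertices reachable from $j$; the newly counted vertices are exactly those reachable from $j$ (including $j$ itself, which lies in $\bar R(S)$ and is thus not reachable from $S$ but is trivially reachable from itself) that were not already reachable from any node of $S$, and there are precisely $r_j^\omega(S)$ of them. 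Substituting these two cases into \eqref{eq:gen-nd} annihilates every term with $j\in R(S)$ and rewrites the remaining coefficients as $r_j^\omega(S)$, producing \eqref{eq:optcut-submod}.

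I do not anticipate a genuine obstacle: the argument is essentially a bookkeeping computation of the submodular marginals. The only point warranting care is confirming that $j$ itself is included in the count $r_j^\omega(S)$, which is immediate because $j\in\bar R(S)$ ensures $j$ is not reachable from $S$ yet is reachable from itself, so it contributes a fresh unit to the marginal. Everything else reduces to the elementary observation that reachability in the fixed directed graph $G_\omega$ distributes over unions of source sets.
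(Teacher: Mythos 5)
Your proposal is correct and follows essentially the same route as the paper's own proof: both invoke the validity of the generic submodular cut \eqref{eq:gen-nd} via Theorem \ref{thm:submodeq}, then identify $\rho_j^\omega(S)=r_j^\omega(S)$ for $j\in\bar R(S)$ and $\rho_j^\omega(S)=0$ for $j\in R(S)$, so that the generic cut collapses to \eqref{eq:optcut-submod}. Your write-up merely spells out the reachability bookkeeping (distributivity over unions of source sets, and the inclusion of $j$ itself in the count) in slightly more detail than the paper does.
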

\begin{proof}
From Theorem \ref{thm:submodeq}, we know that $\theta_\omega\le \sigma_\omega(S) +\sum_{j\in V\setminus S} \rho^\omega_j(S) x_j$ is a valid inequality. Note that $\bar R(S)\subseteq V\setminus S$ and for $j\in \bar R(S)$, we have $\rho^\omega_j(S)=r_j^\omega(S)$, in other words, the marginal contribution of adding $j\in \bar R(S)$ to $S$ is precisely $r_j^\omega(S)$. Furthermore, for any node  $j\in R(S)$, the marginal contribution of adding $j$ to $S$ is zero, because $j$ is already reachable from at least one node in $S$. This completes the proof. 
\end{proof}

We
 refer to the cuts in the form of \eqref{eq:optcut-submod} as {\it submodular optimality cuts}.
Next we give conditions under which inequalities  \eqref{eq:optcut-submod}  are facet defining for conv($\mathcal S_\omega$). For $i\in V$, let indeg($i$) and outdeg($i$) denote the in-degree and out-degree of node $i$, respectively. Let $T:=\{i\in V: \text{indeg}(i)=0\}$, we refer to the nodes in $T$ as {\it root nodes}. For $i\in V\setminus T$, let $P_i$ be the set of root nodes such that $i$ is reachable from the nodes in this set, i.e., $P_i:=\{j\in T: i\in R(\{j\})\}$. Finally, let $L:=\{i\in V: \text{indeg}(i)>0,\text{outdeg}(i)=0\}$ denote the set of {\it leaf nodes} that have no outgoing arcs.

\begin{proposition} \label{prop:facet}
For $S\subseteq V$ and $\omega\in \Lambda$ the submodular inequality  \eqref{eq:optcut-submod}  is facet defining for conv($\mathcal S_\omega$) {\it only if} the following conditions hold
\begin{enumerate}
\item if $i \in S$, then $i\not \in T$, 
\item there exists $T'\subseteq T$ with $|T'|<k$ such that $S\subseteq R(T')$.
\end{enumerate}
These conditions are also sufficient 
\begin{enumerate}
\item if $S=\emptyset$ (for any $k\ge 1$), or
\item if $|S|=1$ for $k\ge 2$.
\end{enumerate}
\end{proposition}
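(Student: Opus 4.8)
The plan is to prove both directions by analyzing the face $F$ that \eqref{eq:optcut-submod} cuts out of $P:=\mathrm{conv}\big(\mathcal S_\omega\cap\{x\in\{0,1\}^n:\sum_{j\in V}x_j\le k\}\big)$, the polytope of $\mathcal S_\omega$ intersected with the cardinality constraint \eqref{eq:card}. First I would record that $P$ is full dimensional in $\R^{n+1}$: the $n+2$ points $(0,\mathbf 0)$, $(-1,\mathbf 0)$, and $(\sigma_\omega(\{j\}),e_j)$ for $j\in V$ (with $e_j$ the $j$th unit vector) are affinely independent members of $P$. Consequently \eqref{eq:optcut-submod} is facet defining exactly when $F$ contains $n+1$ affinely independent points, and it fails to be facet defining as soon as $F$ lies on a hyperplane that is not a scalar multiple of \eqref{eq:optcut-submod} (full dimensionality guarantees that $\{x_t=1\}$ and $\{\sum_j x_j=k\}$ below are genuine extra equations). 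The engine of the whole argument is an exact description of the integer points of $F$. Writing the reachable closure of $S$ as $S\cup R(S)$, so $\sigma_\omega(S)=|S\cup R(S)|$, I would show that a feasible $X$ is tight for \eqref{eq:optcut-submod} if and only if $S\subseteq X\cup R(X)$ (that is, $X$ reaches every node of $S$) and the closures of the nodes of $X\cap\bar R(S)$ are pairwise disjoint outside $S\cup R(S)$. This is obtained by bounding $\sigma_\omega(X)=|X\cup R(X)|$ above by $\sigma_\omega(S)+\sum_{j\in X\cap\bar R(S)}r_j^\omega(S)$ and tracking when the bound is tight.

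For necessity I would treat the two conditions separately. If a root $t\in T$ lies in $S$, then for any tight $X$ the set $X\cup\{t\}$ satisfies \eqref{eq:optcut-submod} with the \emph{same} right-hand side, since $t\notin\bar R(S)$ leaves $X\cap\bar R(S)$ unchanged; but $\sigma_\omega(X\cup\{t\})\ge\sigma_\omega(X)+1$ because a root contributes at least itself. Were $t\notin X$, this would violate validity, so $t\in X$ for every tight $X$, i.e. $F\subseteq\{x_t=1\}$ and \eqref{eq:optcut-submod} is not facet defining, proving condition 1. For condition 2, assume condition 1 holds and use the face description: every tight $X$ reaches $S$, so $|X|$ is at least the minimum number of seeds needed to reach $S$. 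Replacing any seed by a root ancestor only enlarges its reachable set, so this minimum equals the minimum number of roots $T'$ with $S\subseteq R(T')$. If condition 2 fails this minimum is at least $k$, forcing $|X|=k$ on every tight $X$; hence $F\subseteq\{\sum_j x_j=k\}$ and \eqref{eq:optcut-submod} is again not facet defining.

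For sufficiency in the two stated cases I would exhibit $n+1$ affinely independent tight points. When $S=\emptyset$ the cut reads $\theta_\omega\le\sum_{j\in V}\sigma_\omega(\{j\})x_j$, and the sets $\emptyset$ and $\{j\}$, $j\in V$, are all tight (singletons never double count); their characteristic vectors $\mathbf 0,e_1,\dots,e_n$ are affinely independent and all have $|X|\le1\le k$. When $|S|=1$, say $S=\{s\}$, conditions 1 and 2 give $s\notin T$ and a root $t$ with $s\in R(\{t\})$, hence $\{s\}\cup R(\{s\})\subseteq\{t\}\cup R(\{t\})$. I would use $X=\{s\}$, the sets $X=\{s,j\}$ for each $j\ne s$, and $X=\{t\}$. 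The face description makes each tight: $\{s,j\}$ adds exactly $r_j^\omega(\{s\})$ nodes when $j\in\bar R(\{s\})$ and nothing when $j\in R(\{s\})$, and $\{t\}$ is tight because $t$ reaches all of $\{s\}\cup R(\{s\})$. The vectors $e_s$, $e_s+e_j$ $(j\ne s)$, and $e_t$ are affinely independent, and each set has size at most $2\le k$, so \eqref{eq:optcut-submod} is facet defining.

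The main obstacle is the face description in the first paragraph, and in particular the identity behind condition 2 that the minimum number of seeds reaching $S$ equals the minimum number of roots covering $S$; getting the double-counting bookkeeping and the interplay with the cardinality constraint \eqref{eq:card} exactly right is where the care is required. Once that description is established, both necessity reductions (to $\{x_t=1\}$ and to $\{\sum_j x_j=k\}$) and the two affine-independence constructions become routine.
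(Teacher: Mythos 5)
Your proposal is correct in substance but reaches Proposition~\ref{prop:facet} by a genuinely different route than the paper. The paper proves necessity \emph{constructively}: when condition (i) or (ii) fails, it writes down the submodular inequality for a smaller set ($S\setminus\{i\}$, or $S$ minus $k$ nodes with pairwise disjoint root-ancestor sets $P_i$) and shows that it dominates \eqref{eq:optcut-submod} using $x_i\le 1$ or $\sum_{j\in V} x_j\le k$; you instead prove a tightness characterization of the integer points of the face (a lemma the paper never states) and conclude non-facetness by trapping the face inside the extra hyperplane $\{x_i=1\}$ or $\{\sum_{j\in V} x_j=k\}$. Each approach buys something. The paper's dominating inequalities are what make the remark following the proposition possible: a non-facet cut encountered by the algorithm can be \emph{replaced} by an explicitly stronger one, which your argument does not provide. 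Conversely, your route repairs two genuine defects in the paper's proof. In necessity (ii), the paper infers from ``every root cover of $S$ has size $\ge k$'' that there exist $k$ nodes of $\hat S$ with pairwise disjoint sets $P_i$; this covering-packing implication is false in general (three roots $r_1,r_2,r_3$ and three leaves $a,b,c$ with $P_a=\{r_1,r_3\}$, $P_b=\{r_1,r_2\}$, $P_c=\{r_2,r_3\}$, $k=2$: the minimum cover has size $2$, yet no two of the $P$'s are disjoint), whereas your argument---every tight $X$ needs at least as many seeds as the minimum root cover, hence lies on $\{\sum_{j\in V} x_j=k\}$---handles such instances. In sufficiency for $|S|=1$, the paper's first point $(\sigma_\omega(\{j\}),\mathbf 0)$ is not in conv($\mathcal S_\omega$) at all (it violates the cut \eqref{eq:optcut-submod} for $S=\emptyset$); your replacement point $(\sigma_\omega(\{t\}),\mathbf e_t)$, with $t$ the root supplied by condition (ii), is feasible and tight, so your list of $n+1$ affinely independent points is the correct one.

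One caveat, which you should state as an explicit hypothesis: your step ``replacing any seed by a root ancestor only enlarges its reachable set'' presumes that every node of $S$ is reachable from some root. This holds when $G_\omega$ is acyclic, but a live-arc graph can contain directed cycles with no incoming path from $T$, and then no root ancestor exists. This is not a defect relative to the paper---its proof needs the same hypothesis through the sets $P_i$---and in fact without it the proposition itself fails: take $V=\{1,2,3\}$ with arcs $(1,2)$ and $(2,1)$, node $3$ isolated, $k=2$, $S=\{1,2\}$; condition (ii) fails (the only root is $3$ and $R(\{3\})=\emptyset$), yet the cut $\theta_\omega\le 2+x_3$ is facet defining, since the tight points $(2,\mathbf e_1)$, $(2,\mathbf e_2)$, $(2,\mathbf e_1+\mathbf e_2)$, $(3,\mathbf e_1+\mathbf e_3)$ are affinely independent. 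Add the root-reachability (acyclicity) hypothesis explicitly, and your argument is complete.
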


\begin{proof}
\noindent{\it Necessity.} First, note that the submodular inequality  \eqref{eq:optcut-submod}  for a set $S$ is equivalent to that for the set $S\cup R(S)=:\hat R(S)$, because $\sigma_\omega(S)=\sigma_\omega(\hat R(S))$, $\bar R(S)=\bar R(\hat R(S))$,  $r_j^\omega(S)=r_j^\omega(\hat R(S))$ for all $j\in \bar R(S)$ and $\rho_j^\omega(S)=0$ for $j\in R(S)$. Therefore, without loss of generality, we assume that for all non-leaf nodes $i\in S\setminus L$, we have $R(\{i\})\subseteq S$ (Assumption {\bf A1}). 
\begin{enumerate}
\item Suppose, for contradiction, that there exists $i \in S \cap T$. Now consider the submodular inequality  \eqref{eq:optcut-submod} for the set $S'=S\setminus\{i\}$ given by 
\begin{equation}\label{eq:dom-T}
\theta_\omega\le \sigma_\omega(S')+\sum_{j\in \bar R(S')} r_j^\omega(S') x_j=\sigma_\omega(S)-1+x_i+\sum_{j\in \bar R(S)} r_j^\omega(S) x_j,
\end{equation}
which follows because the set of all descendants of $i$, $R(\{i\})$ is contained in $S$ by Assumption {\bf A1}, so removing $i$ reduces the influence function by exactly 1 (recall that, by the contradictory assumption $i\in T$, hence its in-degree is 0 and it is not influenced by any other node in the graph), and  the set of nodes not reachable from $S'$  is given by $\bar R(S')=\bar R(S)\cup\{i\}$, and hence the coefficients $r_j^\omega(S') =r_j^\omega(S)$ for $j\in \bar R(S)$, and $r_i^\omega(S') =1$.  Because $x_i\le 1$, inequality \eqref{eq:dom-T} dominates the submodular inequality  \eqref{eq:optcut-submod}  for this choice of $S$. Hence, the  submodular inequality for a set $S$ such that  there exists $i \in S \cap T$  is not facet defining for conv($\mathcal S_\omega$).
\item Suppose, for contradiction, that there does not exist $T'\subseteq T$ with $|T'|<k$ such that $S\subseteq R(T')$. In other words, the minimum cardinality of root nodes $T'\subseteq T$ such that $S\subseteq R(T')$ is greater than or equal to $k$. In this case, consider the set $\hat S:=\{i\in S: \nexists j\in S \text{ with } i\in R(\{j\})\}$, in other words, $\hat S$ is the set of nodes in the graph induced by $S$ that have no incoming arcs from other nodes in $S$. Note that from condition (i),  we know that $\hat S\cap T=\emptyset$. Then, by the contradictory assumption,  
there exist at least $k$ nodes, say nodes $1,\ldots,k\in \hat S$ such that $P_i \cap P_j=\emptyset$ for all pairs $i,j\in\{1,\ldots,k\}, i\ne j$. Now consider the submodular inequality  \eqref{eq:optcut-submod} for the set $S'=S\setminus\{1,\ldots,k\}$ given by 
\begin{equation}\label{eq:dom-hatS}
\theta_\omega\le \sigma_\omega(S')+\sum_{j\in \bar R(S')} r_j^\omega(S') x_j=\sigma_\omega(S)-k+\sum_{j\in \bar R(S)} r_j^\omega(S) x_j + \sum_{i=1}^k \sum_{j\in \hat R(P_i)\setminus R(\{i\})}x_j ,
\end{equation}
 which follows because the set of all descendants of $i\in\{1,\ldots,k\}$, $R(\{i\})$, is contained in $S$ by Assumption {\bf A1}, so removing nodes $i=1,\ldots,k$ reduces the influence function by exactly $k$, and  the set of nodes not reachable from $S'$  is given by $\bar R(S')=\bar R(S)\cup\{1,\ldots,k\}$. In addition, the coefficients $r_j^\omega(S') =r_j^\omega(S)$ for $j\in \bar R(S)$ such that $j\not \in \cup_{i=1}^k\left (\hat R(P_i)\setminus R(\{i\})\right)$, $r_j^\omega(S') =r_j^\omega(S)+1$ for $j\in \bar R(S)$ such that $j \in \cup_{i=1}^k\left (\hat R(P_i)\setminus R(\{i\})\right)$,  and $r_i^\omega(S') =1$ for $i=1,\ldots,k$. 
Because $ \sum_{i=1}^k \sum_{j\in\hat R(P_i)\setminus R(\{i\})}x_j\le \sum_{j\in V} x_j \le k$, inequality \eqref{eq:dom-hatS} dominates the submodular inequality  \eqref{eq:optcut-submod}  for this choice of $S$. Hence, there must exist $T'\subseteq T$ with $|T'|<k$ such that $S\subseteq R(T')$ for the  submodular inequality \eqref{eq:optcut-submod} to be facet defining for conv($\mathcal S_\omega$). 

\end{enumerate}

\noindent{\it Sufficiency.}  First, note that for $\omega\in \Lambda$, dim$(\mathcal S_\omega)=n+1$. Let $\mathbf e_i$ be a unit vector of dimension $n$ whose $i$th component is 1, and other components are zero.
\begin{enumerate}
\item  Note that when $S=\emptyset$, the necessity conditions are trivially satisfied.  Consider the $n+1$ affinely independent  points: $(\theta_\omega,x)^0=\mathbf 0$, and $(\theta_\omega,x)^i=(\sigma_\omega(\{i\}), \mathbf e_i)$, for $i\in V$. These  points  are on the face defined by the inequality  \eqref{eq:optcut-submod} for $S=\emptyset$. Hence  inequality  \eqref{eq:optcut-submod} for $S=\emptyset$ is facet-defining for conv($\mathcal S_\omega$).

\item Note that for $|S|=1$, the necessity conditions imply that $S:=\{j\}$ for some $j\in L$. Consider the $n+1$ affinely independent  points: $(\theta_\omega,x)^0=(\sigma_\omega(\{j\}), \mathbf 0)$; $(\theta_\omega,x)^j=(\sigma_\omega(\{j\}), \mathbf e_j$  and $(\theta_\omega,x)^i=(\sigma_\omega(\{i,j\}), \mathbf e_j+\mathbf e_i)$, for $i\in V\setminus\{j\}$. The last set of points is feasible because we have $k\ge 2$ in this case.  These  points  are on the face defined by the inequality  \eqref{eq:optcut-submod} for $S=\{j\}$. Hence  inequality  \eqref{eq:optcut-submod} for $S=\{j\}$ is facet-defining for conv($\mathcal S_\omega$).

\end{enumerate}

\end{proof}

Note that during the course of the algorithm, if a submodular inequality \eqref{eq:optcut-submod}  corresponding to the seed set $S$ does not satisfy the necessary conditions given in Proposition \ref{prop:facet}, then a stronger inequality can be constructed using the arguments in the proof of the proposition. 

\subsection{Facet conditions at work}  

From Proposition \ref{prop:facet} we see that inequalities \eqref{eq:optcut-submod} with $S=\emptyset$ are facets of conv($\mathcal S_\omega$) for any $k\ge 1$. We will also see their importance in  our computational study. Similarly, inequalities \eqref{eq:optcut-submod} with $|S|=1$ are facets of conv($\mathcal S_\omega$) for any $k\ge 2$. We note that more conditions are necessary for the inequalities \eqref{eq:optcut-submod} with $|S|=2$  to be facets of conv($\mathcal S_\omega$). We illustrate this in the next example. 

\begin{example}\label{ex:toy}
Consider the network in Figure \ref{Fig:1024} for a given scenario $\omega\in \Lambda$ and let $k=2$. From Proposition \ref{prop:facet}, inequalities \eqref{eq:optcut-submod} with $S=\emptyset$, and inequalities \eqref{eq:optcut-submod} with $S=\{j\}$, for $j=4,\ldots,9$ are facet-defining for conv($\mathcal S_\omega$). Inequalities \eqref{eq:optcut-submod} with $S=\{7,8\}$ or   $S=\{5,6\}$ are facets of conv($\mathcal S_\omega$);  each of these sets satisfy the necessary facet conditions in Proposition \ref{prop:facet}, which for these choices  of $S$  also turn out to be sufficient. 
However, the sets  $S=\{7,9\}$ or   $S=\{4,5\}$ satisfy the necessary facet conditions in Proposition \ref{prop:facet}, but they do not lead to facet-defining inequalities for $k=2$.  
Finally, $S=\{4,7\}$ violates the necessity condition (ii) of Proposition \ref{prop:facet} (the minimum number of root nodes that can influence 4 and 7 is $2=k$) and is not a facet.  
\begin{figure}[htbp]
	\centering	
	\includegraphics[width=8.5cm]{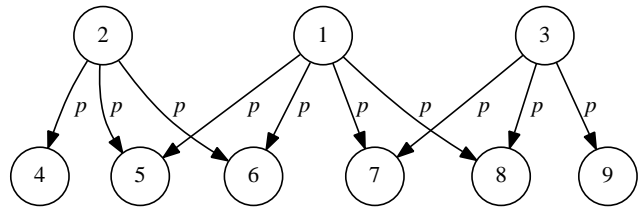}
	\caption{Network with 9 nodes and 10 arcs with equal influence probabilities $p$.}
	\label{Fig:1024} 
\end{figure}

\end{example}

It is important to note that in the direct adaptation of the  delayed constraint generation algorithm proposed by \cite{NW88} to our problem,  for a given solution $\bar x$ to the current master problem, one would use the submodular inequalities \eqref{eq:optcut-submod}, where we let $S=\{i\in V: \bar x_i=1\}=:\bar X$.   From Proposition \ref{prop:finconv}, we have that Algorithm \ref{alg:benders} with optimality cuts \eqref{eq:optcut-submod} with $S=\bar X$ converges to an optimal solution in finitely many iterations   for  two-stage stochastic submodular maximization problems. However, note that at any iteration, the solution to the master problem, $\bar x$ will be such that  $\sum_{j\in V} \bar x_j=k$. Therefore, all submodular optimality cuts \eqref{eq:optcut-submod} added will have $|S|=k$, and no facets with $S=\emptyset$  will be added for $k\ge 1$.  
This may  lead to slow convergence of the delayed constraint generation algorithm with submodular optimality cuts for $S=\bar X$, which we illustrate next. 

\renewcommand{\theexample}{\ref{ex:toy}}
\begin{example} (Continued.)
Consider the network in Figure \ref{Fig:1024}, and suppose that $|\Lambda|=1$, hence we consider a deterministic problem.  Adding the submodular optimality cut  \eqref{eq:optcut-submod} for $S=\emptyset$: $\theta_1\le 5x_1+4x_2+4x_3+\sum_{j=4}^9 x_j$ to the cardinality constraint, and solving the linear programming relaxation of the master problem yields the integer optimal solution, $\bar x_1=1, \theta_1=5$ at the first iteration (from Proposition \ref{prop:k=1}). In contrast, solving the  master problem without any optimality cuts (i.e., with just the cardinality constraint) may lead to an initial solution of $\bar x_2=1$. Then the following set of submodular cuts are added in that order during the course of the algorithm of \cite{NW88}: 
\begin{align}
\theta_1&\le 4+3x_1+4x_3+\sum_{j=7}^9 x_j & (S=\bar X=\{2\}), \label{eq:ex-toy1}\\
\theta_1&\le 4+3x_1+4x_2+\sum_{j=4}^6 x_j & (S=\bar X=\{3\}),\\
\theta_1&\le 5+2x_2+2x_3+x_4+x_9 & (S=\bar X=\{1\}). \label{eq:ex-toy3}
\end{align}
Furthermore, none of the  inequalities \eqref{eq:ex-toy1}-\eqref{eq:ex-toy3} are facet-defining. Solving the LP relaxation of the master problem with the optimality cuts \eqref{eq:ex-toy1}-\eqref{eq:ex-toy3} leads to a fractional solution: $\bar x_2=\bar x_3=0.5$. This small example highlights that a textbook implementation of the algorithm by \cite{NW88} may lead to slow convergence because  the algorithm (1) may explore, in the worst case,  $O{{n}\choose{k}}$ many locally optimal solutions before finding an optimal solution, and (2) may require long solution times for each master problem, because the optimality cuts given by $S=\bar X$ may not be facet-defining and hence  lead to  weak LP relaxations. 
\end{example}

These observations enable us to devise a more efficient implementation of Algorithm \ref{alg:benders}, which we  report in Section \ref{sec:comp}.

\exclude{the associated characteristic vector $\bar X\cap T\ne \emptyset$ (otherwise we can strictly increase the objective value by at least one). Therefore, the  submodular optimality cuts \eqref{eq:optcut-submod} with $S=\bar X$ used in such an implementation are not facet-defining, from necessary facet condition (i) of Proposition \ref{prop:facet}.   However, our facet proof immediately yields a stronger inequality to use than the one given by $S=\bar X$.  In addition, because}

\exclude{ 

In addition, in most sparse networks the optimal solution often corresponds to $k$ connected components and there may be an exponential number of alternative optima. In such cases, the textbook implementation of \cite{NW88}  explores all alternative solutions before it declares optimality.  Hence, if a tight upper bound on the influence function is added to the master problem for such instances, then the convergence of the algorithm can be drastically improved, as we show next.

\begin{proposition} \label{prop:nooverlap}
Suppose that at any iteration of  Algorithm \ref{alg:benders}, for a given $k\ge 2$, the optimal solution to the current master problem, $\bar x$, is such that the graph induced by the nodes in  $\bar X\cup R(\bar X)$ has $k$ connected components in $G_\omega$ for all $\omega\in \Lambda$. Then  inequality \eqref{eq:optcut-submod} with $S=\emptyset$ is guaranteed to terminate the algorithm and declare $\bar x$ as the optimal solution. 
\end{proposition}

\begin{proof}
Note that each solution to the master problem provides a lower bound LB$=\sigma(\bar x)$ on the optimal objective function value. Furthermore, if $\bar X\cup R(\bar X)$ have $k$ connected components in $G_\omega$ for all $\omega\in \Lambda$, then $\theta_\omega\le\sum_{j\in \bar X}\rho_j^\omega(\emptyset)$. In addition, if $\bar X$ has $k$ connected components, then we must have $\bar X\subseteq T$ and $ \sum_{j\in \bar X}\rho_j^\omega(\emptyset) = \sigma_{\omega}(\bar x)$. Hence, inequality  \eqref{eq:optcut-submod} with $S=\emptyset$ provides an upper bound on the optimal objective function value given by  $\sum_{\omega\in \Lambda}p_\omega \theta_\omega \le \sigma(\bar x)$. Therefore, $\bar x$ must be optimal.  
\end{proof}
}

\exclude{
\subsection{Sampling issues}

Note that while sampling issues are not considered in the present paper, there is a rich body of work on sampling for stochastic programs. In particular, \cite{KSH02} show that for discrete stochastic optimization problems,  for a sample size of $\Omega\left(\frac{3V}{\epsilon^2}\ln \left(\frac{|\mathcal X|}{\alpha}\right)\right)$, the probability that the optimal solution to the sampled problem is an $\epsilon$-optimal solution to the original problem is at least $1-\alpha$.  Here $V$ is a parameter bounding the maximum variance of the difference between an optimal objective value and the objective value of a non-$\epsilon$-optimal solution. Observe that because $|\mathcal X| $ has $O(n^k)$ elements in our case, the  sample size estimate grows in the order of $k \ln n$. 
Furthermore, after a candidate solution is found, one can utilize multiple replications method for assessing solution quality to obtain asymptotically valid confidence intervals for the optimality gap \cite[see][for a survey of sampling methods for stochastic programming]{HB14}. 

}

\subsection{Extensions}

In this section, we give various extensions of the influence maximization problems that can be solved using our proposed methods. 

\subsubsection{Extensions to live-arc graph models}

Observe that while we demonstrate our general algorithm  on the independent cascade and linear threshold models, our proposed model and method is applicable to many extensions of the social network problems studied in the literature. 
For example, an extension considered in the literature is to replace the cardinality constraint on the number of nodes selected with a knapsack  constraint representing a marketing budget where each node has a different cost to market. This model also admits an adapted and more involved 0.63-factor greedy approximation algorithm \cite[see,][]{KMN99,S04}. In fact, our model is flexible enough to allow any constraints in $\mathcal X$ so long as the master problem can be solved with an optimization solver, while the greedy approximation algorithm needs careful adjustment and analysis for each additional constraint. Similarly,  the time-constrained influence spread problem studied in \cite{CLZ12} and \cite{LCXZ12} can also be solved using our method. In this problem, there is an additional constraint that the number of time periods it takes to influence a node should be no more than a given parameter $\tau$.  The resulting influence spread function is monotone and submodular, hence we can use  inequalities \eqref{eq:gen-nd} as the submodular optimality cuts.  
Furthermore, we can efficiently calculate the coefficients $\rho^\omega_j(\bar X) $ by solving, with breadth-first search, a modified reachability problem limiting the number of hops from the seed set $\bar X$ to any other node by $\tau$.

\subsubsection{General  cascade and general    threshold models}

In the {\it general cascade model}, every node $j\in V$ has an {\it activation function} $p'_j(i,S)\in[0,1]$ for $S\subseteq \{(k,j)\in A\}=:N^{in}(j)$ and $i\in  N^{in}(j)\setminus S$. The activation function represents the probability that node $j$ is influenced by node $i$ given that the nodes in $S$ failed to activate node $i$. The independent cascade model is a special case, where $p'_j(i,S)=\pi_{ij}$, independent of $S$. 

In the {\it general threshold  model},  every node $j\in V$ has an {\it threshold function} $f_j(S)$ for $S\subseteq N^{in}(j)$, where $f_j(\cdot)$ is monotone and $f_j(\emptyset)=0$. As before, every node $j$ selects a threshold $\nu_j$ uniformly at random in the range $[0,1]$. Then, a node $j$ is activated if for a given active set $S$, $f_j(S\cap  N^{in}(j))\ge \nu_j$. The linear threshold model is a special case, where $f_j(S)=\sum_{i\in S}w_{ij}$. 

\cite{KKT03} show that general cascade model is equivalent to the general threshold model with an appropriate selection of activation and threshold functions. This is not true for the independent cascade and the linear threshold models \cite[see Example 2.14 in][]{CLC13}. Furthermore, the influence spread function is no longer submodular. However, if $f_j(S)$ is submodular for all $j\in V$, then the influence spread is submodular (first conjectured by \cite{KKT03} and later proven by \cite{MR07,MR10}). Therefore, the greedy hill climbing algorithm is a 0.63-approximation algorithm for this case as well. Algorithm \ref{alg:benders} is applicable  in the submodular threshold functions case, where the optimality cuts take the more general form \eqref{eq:gen-nd} or \eqref{eq:gen-nm} depending on the monotonicity of the function $f$.

\section{Computational Experiments}\label{sec:comp}

In this section we summarize our experience with solving the influence maximization problem using the   delayed constraint generation method  (DCG) with various optimality cuts as given in Algorithm \ref{alg:benders}, and the greedy hill-climbing algorithm (Greedy) of \cite{KKT03} as given in Algorithm \ref{alg:greedy}. 
The algorithms are implemented in C++ with IBM ILOG CPLEX 12.6 Optimizer. All experiments were executed on a Windows Server 2012 R2 with an Intel Xeon E5-2630 2.40 GHz CPU, 32 GB DRAM and x64 based processor.   
In our implementation of Algorithm  \ref{alg:benders}, we set the parameter $\varepsilon=0$.  
 For the master problem of the decomposition algorithm, the relative MIP gap tolerance of CPLEX was set to 1\%, so a feasible solution which has an optimality gap of 1\% is considered optimal.


\subsection{Small-Scale Network} First, we study the quality of the solutions produced by  DCG and Greedy on a small-scale network for which we can enumerate all possible outcomes of the random process. In these experiments, we are able to capture the random process precisely, and no information is lost through sampling from the true distribution. 
 An illustrative network  is given in   \figref{Fig:1024} with 9 nodes, 10 directed arcs and  independent influence probability $\pi_{ij}=p$ for all $(i,j)\in A$. Our goal is to select $k=2$ seed nodes, so that the objective value, which is the expected number of nodes influenced by the seed nodes, is maximized. We generate all possible influence  scenarios (a total of $2^{10}=1024$ scenarios). Note that under the assumption that each influence is independent of the others, the probability of  scenario $\omega$,  which has $\ell \leq 10$ live arcs, is given by $p_\omega = \left( 1-p \right)^{  10-\ell  } p^\ell$. 

The solution of DCG and Greedy methods on  1024 scenarios with various values of $p = 0.1, 0.2,\ldots, 1$ is shown in \tabref{Table:1024}. When $p \leq 0.5$, both algorithms have the same objective value. For  $ 0.6\le p \le 1$, Greedy  selects  node 1 as the seed in the first iteration of Algorithm \ref{alg:greedy} (line 4 of Algotihm \ref{alg:greedy}) and selects either node 2 or 3 as the seed in the second iteration. However, DCG selects nodes 2 and 3 as the seed nodes, and  provides a better objective value than Greedy (up to 12.5\% improvement). So while Greedy does better than its worst-case bound (63\%), it is within 12.5\% of optimality.  

Next, instead of generating all 1024 scenarios, we employed Monte-Carlo sampling, and independently sampled  different number of scenarios $|\Lambda| = 10, 50$ and 100  according to different $p$ values, and let $p_\omega = 1/|\Lambda|$. We summarize the results of this experiment in  \tabref{Table:sample1024}. For eight out of 15 cases, DCG  has a higher objective value than Greedy, and in all other cases Greedy attains the optimal objective value (mostly for small influence probabilities $p=0.1, 0.3$). We also observe that the objective value for the instances with a larger number of scenarios is generally  closer to the objective value with all 1024 scenarios (except for $p=0.1$ and $|\Lambda| =100$). Note that Greedy is a 0.63-approximation algorithm even for the sampled problem, which assumes that the true distribution is given by the  scenarios in $\Lambda$, whereas DCG provides the optimal solution to the sampled problem. 


\exclude{
	, and the small number of scenarios still gives us a good result in this problem. 
	In this paper, we focus on showing that our methods scales well with respect to the number of scenarios representing the uncertainty. The reason why different influence probability $p$ or different number of sampling scenarios causes the different objective value of two algorithms raises another issue, and it will be discussed as the future work.
}

\begin{table}[htb]
	\caption{Expected influence  obtained from two algorithms for  the small-scale network with 1024 scenarios.}
	\label{Table:1024}
	\begin{center}
		\begin{tabular}{|c|c|c|c|c|c|c|c|c|c|c|} \hline
			& \multicolumn{10}{|c|}{Objective values with different $p$}  \\
			\cline{2-11} %
			Algorithm & $p=1.0$ & $p=0.9$ & $p=0.8$ & $p=0.7$ & $p=0.6$ & $p=0.5$ & $p=0.4$ & $p=0.3$ & $p=0.2$ & $p=0.1$\\
			\hline
			DCG & 8 & 7.4 & 6.8 & 6.2 & 5.6 & 5 & 4.48 & 3.92 & 3.32 & 2.68 \\
			Greedy & 7 & 6.68 & 6.32 & 5.92 & 5.48 & 5 & 4.48 & 3.92 & 3.32 & 2.68 		 
			\\\hline%
		\end{tabular}
		
	\end{center}
\end{table}

\begin{table}[htb]
	\caption{Expected influence  obtained from two algorithms for the small-scale network  with $|\Lambda|$ scenarios.}
	\label{Table:sample1024}
	\begin{center}
		\begin{tabular}{|cc|c|c|c|c|c|} \hline
			& & \multicolumn{5}{|c|}{Objective values for different $p$}  \\
			\cline{3-7} %
			$|\Lambda|$ & Algorithm & $p=0.9$ & $p=0.6$ & $p=0.5$ & $p=0.3$ & $p=0.1$ \\
			\hline
			10 & DCG & 7.1 & 5.2 & 4.7 & 3.4 & 2.6 \\
			10 & Greedy & 6.8 & 5.1 & 4.6 & 3.4 & 2.6 	 \\\hline%
			50 & DCG & 7.4 & 5.84 & 5.18 & 3.98 & 2.68 \\
			50 & Greedy & 6.82 & 5.66 & 5.06 & 3.98 & 2.68 \\\hline%
			100 & DCG & 7.38 & 5.61 & 5.04 & 3.96 & 2.76 \\
			100 & Greedy & 6.69 & 5.52 & 5.04 & 3.96 & 2.76 	 \\\hline%
		\end{tabular}
		
	\end{center}
\end{table}

\subsection{Large-Scale Network with Real-World Datasets} \label{sec:hepnet}

To evaluate the efficiency of DCG and Greedy on large networks, we conduct computational experiments on  four real-world datasets with different categories and scales. These datasets are summarized below and in  \tabref{Table:data_sum}:
\begin{description}
\item [{\bf UCI-message}] is the dataset for the online student community network at the University of California, Irvine  \cite[]{UCI}. The 1,899 nodes represent the students, and the 59,835 directed arcs between two nodes indicate that one student sent a message to the other. 
\item[{\bf P2P02}] is the dataset for the Gnutella peer-to-peer file sharing network from August 2002 \cite[]{p2p1,p2p2}. The 10,876 nodes represent the hosts, and the 39,994 undirected edges denote the connections between two hosts. 
\item[{\bf Phy-HEP}] is the dataset for the academic collaboration network in the ``high energy physics theory" (HEPT) section of the e-print arXiv (\url{www.arxiv.org}) \cite[]{KKT03,CWY09,KKT15}. The 15,233 nodes represent the authors, and the 58,891 undirected edges represent the co-authorship between each pair of authors in the ``high energy physics theory" papers from 1991 to 2003. Note that this is the original dataset considered in   \cite{KKT03}, and it is commonly used as a benchmark in comparing various algorithms for maximizing influence in social networks. 
\item[{\bf Email-Enron}] is the dataset for  the email communication network of  the Enron Corporation \cite[]{email1,email2}.  It is posted to the public by the Federal Energy Regulatory Commission during the investigation. The 36,692 nodes represent the different email addresses, and the 183,831 directed arcs denote one address sent a mail to the other. 
\end{description}
Note that if the  graph in the original datasets contain  undirected edges  between $i$ and $j$, then we  construct  a directed graph with two directed arcs from $i$ to $j$ and $j$ to $i$. We follow the data generation scheme of   \cite{KKT03} in constructing live-arcs graphs for these instances  \cite[i.e., the probabilities of influence on each arc for the independent cascade model, and the arc weights and node thresholds for the linear threshold model follow from][]{KKT03}.  

In our experiments in this subsection, we compare three algorithms:  {\it Greedy} is the greedy hill-climbing  algorithm (Algorithm \ref{alg:greedy});  {\it DCG-SubIneqs} is Algorithm \ref{alg:benders}  using submodular optimality cuts \eqref{eq:optcut-submod} for $S=\bar X$, where $\bar X$ is the optimal solution given by the master problem in the current iteration;  and  {\it DCG-SubWarmup} adds the submodular inequalities \eqref{eq:optcut-submod} for $S = \emptyset $ for each scenario, before the execution of DCG-SubIneqs as a warm-start.  Proposition \ref{prop:k=1} shows that the submodular optimality cut \eqref{eq:optcut-submod} with $S = \emptyset$, which is referred to as EmptySetCut, is sufficient to find the optimal solution for $k=1$ (note that $\rho^\omega_j(\emptyset)\ge 1$ for all $j\in V, \omega\in \Lambda$, hence the assumptions of the proposition are satisfied). Our goal in implementing DCG-SubWarmup is to test  if EmptySetCut is also useful for  $k>1$. To verify this, we add EmptySetCuts for all scenarios to the master problem before executing the DCG algorithm, and solving the initial master problem. Note that the total computation time in our experiments includes the generation time of all EmptySetCuts, and the total number of user cuts also includes the number of EmptySetCuts.   We also implemented Algorithm  \ref{alg:benders} using alternative optimality cuts  \cite[adapted and strengthened versions of integer-L-shaped cuts of][ referred to as Benders-LC, and described in Appendix \ref{sec:app}]{LL93}; 
however, the running time of Benders-LC is extremely slow. Therefore, we only report our results with DCG-SubIneqs, DCG-SubWarmup and Greedy, and discuss the inefficiency of Benders-LC   in Appendix \ref{sec:lc-comp}.

\begin{table}[htb]
	\caption{The summary of real world datasets.}
	\label{Table:data_sum}
	\begin{center}
		\begin{tabular}{|c|c|c|c|c|} \hline
			&  \multicolumn{4}{|c|}{Dataset}  \\
			\cline{2-5} %
			& UCI-message & P2P02 & Phy-HEP & Email-Enron \\
			\hline
			Network Category  & Online-Message  & File-Shearing  & Collaboration & Communication 	 \\
			Nodes  	          & 1,899                    & 10,876          & 15,233 & 36,692 	 \\
			Edges             & 59,835                 & 39,994          & 58,891 & 183,831 \\			
			Format            & Directed            & Undirected     & Undirected & Directed 	 \\\hline%
		\end{tabular}
		
	\end{center}
\end{table}

\subsubsection{Independent Cascade Model} \label{sec:ICexp}

For the independent cascade model, we assign uniform influence probability $\pi_{ij}=p=0.1$ independently to each arc $(i,j)$ in the network as was done in  \cite{KKT03}. Note that  \cite{KKT03}  consider the dataset {\bf Phy-HEP} with influence probabilities $\pi_{ij}\in\{0.01, 0.1\}$ for each arc $(i,j)$ in the network. However, we observe that for $\pi_{ij}=0.01$,  the total number of  live arcs is very small, resulting in  sparse live-arc graphs with  a large number of  singletons. For example, the expected number of live arcs in our largest dataset {\bf Email-Enron} is $183,831*0.01 = 1,838$ with $p=0.01$. However, the number of nodes of  {\bf Email-Enron}  is 36,692, resulting in over 30,000 singletons in the network. Therefore,  we  focus on the more interesting case of $\pi_{ij}=0.1$ in our experiments in this section. 

We generate  $|\Lambda| =  100, 200, 300$  and 500 scenarios  to find $k=1$ to 5 seed nodes that maximize influence.  Tables  \ref{Table:IC-table1}-\ref{Table:IC-table2} summarize our experiments with the algorithms DCG-SubIneqs, DCG-SubWarmup and Greedy for the independent cascade model. 
Column ``$k$" denotes the number of seed nodes to be selected. Column ``Cuts(\#)" reports the total number of submodular inequalities \eqref{eq:optcut-submod} added to the master problem of DCG-SubIneqs, and column  ``Time(s)" reports the solution time in seconds. We do not report the objective values in these experiments, because we are able to prove that despite its worst-case performance guarantee of 63\%,  Greedy is within the optimality tolerance  for these instances. In \cite{KKT03} Greedy is tested empirically against other heuristics such as choosing the nodes with $k$ highest degrees in the graph $G$, because it is said that an optimal solution is not available. Therefore, our computational experiments also provide an empirical test on the greedy heuristic when the optimal solution is available (to the sampled problem) due to our proposed method.  

\begin{sidewaystable}[htp]
	\caption{Independent Cascade Model for UCI-message and P2P02} 
	\label{Table:IC-table1}
	\centering 
	\begin{tabular}{ |p{1cm}p{1cm}||p{1.5cm}p{1.5cm}|p{1.5cm}p{1.5cm}|p{1.5cm}||p{1.5cm}p{1.5cm}|p{1.5cm}p{1.5cm}|p{1.5cm}|  }
		\hline
		
		& &
		\multicolumn{5}{c||}{UCI-message} &
		\multicolumn{5}{c|}{P2P02}  \\
		
		\cline{3-12}%
		& 
		& \multicolumn{2}{c|}{ {  DCG-SubIneqs} } & \multicolumn{2}{c|}{ {  DCG-SubWarmup} } & \multicolumn{1}{c||}{ {Greedy} } 
		& \multicolumn{2}{c|}{ {  DCG-SubIneqs} } & \multicolumn{2}{c|}{ {  DCG-SubWarmup} } & \multicolumn{1}{|c|}{ {Greedy} }			
		\\

		{$k$} & { $|\Lambda|$}  
		& {Time(s) }& { Cuts(\#) } & {Time(s) }  & {  Cuts(\#) }&  { Time(s)} 		
		& {Time(s) }& { Cuts(\#) } & {Time(s) }  & { Cuts(\#) }&  { Time(s)} 				
		\\	\hline
		
		
		1 & 100  & 50& 118 & 45 & 100 & 31 & 460& 200 & 227&100&211 \\
		2 & 100  & 75& 200 & 72 & 200 & 40 & 466& 200 & 505&212&364\\
		3 & 100  & 82& 200 & 78 & 201 & 50 & 463& 216 & 514&216&520\\
		4 & 100  & 76& 200 & 71 & 200 & 59 & 721& 245 & 591&244&682\\
		5 & 100  & 81& 200 & 73 & 200 & 69 & 572& 232 & 547&232&823 \\\hline%
		
		1 & 200  & 55& 234 & 43 & 200 & 59 & 538& 400 &  261&200&324\\
		2 & 200  & 90& 399 & 85 & 400 & 77 & 526& 400 &  514&400&558\\
		3 & 200  & 91& 400 & 88 & 400 & 96 & 553& 412 &  552&411&794\\
		4 & 200  & 95& 400 & 84 & 400 & 118 & 569& 433 &  552&428&1014\\
		5 & 200  & 89& 400 & 87 & 400 & 138 & 785& 451 &  590&451&1227 \\\hline%
		
		1 & 300  & 186& 339 & 164 & 300 & 99 & 234& 600 &  136&300&1029\\
		2 & 300  & 356& 596 & 260 & 600 & 133 & 240& 600 &  246&600&1852\\
		3 & 300  & 274& 600 & 264 & 600 & 168 & 246& 600 &  251&600&2652\\
		4 & 300  & 268& 600 & 282 & 600 & 201 & 324& 658 &  295&643&3452\\
		5 & 300  & 273& 600 & 272 & 600 & 232  & 310& 678 &  302&658&4369\\ \hline%

		1 & 500  & 161 & 571 & 99 & 500 & 154 & 529 &	1000 & 271 & 500 & 2508\\
		2 & 500  & 237 & 994 & 201 & 1000 & 202 & 522 &	1000 & 515 & 1000 & 4796\\
		3 & 500  & 221 & 999 & 207 & 1000 & 252 & 489	 & 1000 & 522 & 1000 & 6922\\		
		4 & 500  & 219 & 1000 & 203 & 1000 & 299 & 609	& 1075 & 721 & 1193 & 9018\\
		5 & 500  & 211 & 1000 & 200	& 1000 & 347 & 608 & 1131 & 620 & 1129 & 11043\\ \hline%
		
		\multicolumn{2}{|c||}{\bf Average}  & 159.5& 502.5 &143.9&495.05& 141.2	 & 488.2& 576.55 &436.6&525.85& 2707.9 	 \\ \hline

	\end{tabular}
\end{sidewaystable}

\begin{sidewaystable}[htp]
	\caption{Independent Cascade Model for Phy-HEP and Email-Enron} 
	\label{Table:IC-table2}
	\centering 
	\begin{tabular}{ |p{1cm}p{1cm}||p{1.5cm}p{1.5cm}|p{1.5cm}p{1.5cm}|p{1.5cm}||p{1.5cm}p{1.5cm}|p{1.5cm}p{1.5cm}|p{1.5cm}|  }
		\hline
		
		& &
		\multicolumn{5}{c||}{Phy-HEP} &
		\multicolumn{5}{c|}{Email-Enron}  \\
		
		\cline{3-12}%
		& 
		& \multicolumn{2}{c|}{ {  DCG-SubIneqs} } & \multicolumn{2}{c|}{ {  DCG-SubWarmup} } & \multicolumn{1}{c||}{ {Greedy} } 
		& \multicolumn{2}{c|}{ {  DCG-SubIneqs} } & \multicolumn{2}{c|}{ {  DCG-SubWarmup} } & \multicolumn{1}{|c|}{ {Greedy} }			
		\\

		{$k$} & { $|\Lambda|$}  
		& {Time(s) }& { Cuts(\#) } & {Time(s) }  & {  Cuts(\#) }&  { Time(s)} 		
		& {Time(s) }& { Cuts(\#) } & {Time(s) }  & { Cuts(\#) }&  { Time(s)} 				
		\\
		
		\hline
		
		1 & 100  & 170& 200 & 76&100&616 & 1717& 200 &848&100& 4712\\
		2 & 100  & 650& 208 & 623&200&1141 & 1656 & 200 &2004&200& 7332\\
		3 & 100  & 777& 301 & 771&300&1654 & 1618& 200 &1721&200& 9860\\
		4 & 100  & 1064& 385 & 1054&385&2107 & 1715& 200 &1618&200& 12465\\
		5 & 100  & 375& 491 & 366&491&2530 & 1622& 200 &1628&200& 15137 \\\hline%
		
		1 & 200  & 1800& 400 & 921&200& 906 & 3599& 400 &1810&200&  7610\\
		2 & 200  & 261& 400 & 257&401& 1593 & 4279& 400 &4262&400&  11623\\
		3 & 200  & 524& 599 & 455&598& 2185 & 3372& 400 &3420&400&  15576\\
		4 & 200  & 2208& 770 & 1983&771& 2770 & 3273& 400 &3668&400&  19414\\
		5 & 200  & 845& 989 & 733&988& 3344 & 3265& 400 &3529&400&  23124 \\\hline%
		
		1 & 300  & 416& 600 &245&300&  695 & 5737& 600 &2702&300&  11295\\
		2 & 300  & 414& 609 & 504&600& 1123 & 5256& 600 &5037&600&  17675\\
		3 & 300  & 648& 902 & 567&900& 1561 & 4890& 600 &5003&600&  24357\\
		4 & 300  & 829& 1040 & 863&1038& 2004 & 4921& 600 &6116&726&  31387\\
		5 & 300  & 911& 1190 & 737&1190& 2465 & 4900& 600 &5000&600&  38385 \\ \hline%
		
		1 & 500  & 925 & 1000 & 421 & 500 & 3439 & 9526 & 1000 &	4679 & 500 & 16976 \\					
		2 & 500  & 603 & 1000 & 614 & 1005 & 6250 & 9176 & 1000 & 8756 & 1000 &  25970 \\
		
		3 & 500  & 1266 & 1498 & 1353 & 1500 & 8992  & 9726 & 1000 & 9305 & 1000 & 34775 \\					
		4 & 500  & 1544 & 1985 & 1434 &	1985 & 11545 & 9187 & 1000 &	9507 & 1000 & 43588 \\ 					
		5 & 500  & 2128 & 2220 & 2315 & 2323 & 13808 & 11056 &	1000 & 9390 & 1000 & 52548						
		\\ \hline%
		
		\multicolumn{2}{|c||}{\bf Average} & 917.9& 	839.35&814.6	& 788.75 & 3536.4& 5024.55 & 550 	&4500.15&501.3	& 21190.45 	 \\ \hline

	\end{tabular}
\end{sidewaystable}

From column Cuts(\#) in  Tables \ref{Table:IC-table1}-\ref{Table:IC-table2}, we observe that the number of cuts added to the master problem generally increases with the number of seed nodes $k$.  In other words, more iterations are needed to prove optimality if we have more seed nodes to select. Columns DCG-SubIneqs Time(s) and Cuts(\#) show  that the overall running time does not necessarily increase with the number of user cuts, as more cuts may help the master problem converge to an optimal solution faster. Recall that the running time of  DCG-SubIneqs includes the solution time of the master problem (a mixed-integer program) and the cut generation time of submodular inequalities, which decomposes by each scenario. DCG-SubWarmup  improves the efficiency for most of the instances compared to DCG-SubIneqs, because  it  requires fewer user cuts, and the EmptySetCuts are facet-defining and may lead to faster solution times for the master problem.   This improvement is expected for $k = 1$ (from Proposition \ref{prop:k=1}), but it also holds for $k\ge 2$. 

From columns DCG-SubIneqs Time(s), DCG-SubWarmup Time(s) and Greedy Time(s), we see that the running time of Greedy increases linearly as the number of seed nodes increases, but the same observation can not be made for the number of scenarios. For example, for Phy-HEP with $k=5$ in \tabref{Table:IC-table2}, Greedy takes 2465 seconds to solve the instance with 300 scenarios, but 13808 seconds for the instance with 500 scenarios. In addition, there is no obvious trend in the solution time of DCG-SubIneqs and DCG-SubWarmup  as we increase $k$ or $|\Lambda|$. 

Considering the average solution time, we  observe that  DCG-SubWarmup is  faster than  DCG-SubIneqs, which is in turn faster than Greedy for large networks with more than 10,000 nodes. For example, for  the instance Email-Enron in  \tabref{Table:IC-table2},  the average solution  time  of Greedy  is five times that of  DCG-SubIneqs.  Only for the smallest  instance, UCI-message, Greedy  is the fastest algorithm (see \tabref{Table:IC-table1}). Finally, we note that the time required to calculate $\rho_j^\omega(\bar x)$ for each $\omega\in\Lambda$ and for a given $\bar x\in \mathcal X$ is negligible (it is a reachability problem solved in linear time in $|A_\omega|$). However because this problem needs to be solved for a large number of nodes and scenarios for the calculation of the cut coefficients, the majority of the overall time for DCG is spent on cut generation (for example, the cut generation takes, on average, 80\% of the time over all four problem instances with $|\Lambda|=300$). Because we observe that the major bottleneck is the cut generation time, and that the algorithm converges in a few iterations, we did not implement enhancements  known to improve convergence, such as the trust region method or heuristics \cite[cf.][]{SAGS05,OGH14}.

\exclude{
 This also affects the performance of Greedy, which calculates $\rho_j^\omega(\bar x^i)$, for each $j\in V\setminus R(\bar X^i)$ and $\omega\in \Lambda$ at a solution $x^i$ in iteration $i=1,\dots,k$. So any improvements made in calculating $\rho_j^\omega(\bar x^i)$ will benefit both algorithms.  
}

\subsubsection{Linear Threshold Model} \label{sec:LTexp}

In this section, we summarize our experiments with the linear threshold model. Recall that in the live-arc graph representation of linear threshold models, at most one incoming arc is chosen for each node in the live-arc graph construction  for each scenario. As in \cite{KKT03} we let the deterministic  weight on each arc $(i,j)\in A$ be $w_{ij}=1/\text{indeg}(j)$. We generate $|\Lambda| \in\{ 100, 200, 300, 500\}$  for the  four real-world datasets described earlier.

The results are shown in Tables \ref{Table:LT-table1}-\ref{Table:LT-table2}. Similar to  the independent cascade model, the running time of Greedy   increases linearly in $k$, and there is no obvious trend in the solution time of DCG-SubIneqs and DCG-SubWarmup as we increase $k$ or $|\Lambda|$.   As in the previous experiments for the independent cascade model, DCG-SubWarmup is slower than  Greedy only for the smallest dataset with fewer than 2,000 nodes (UCI-message) (see \tabref{Table:LT-table1}). For the large-scale datasets with over 10,000 nodes (P2P01, Phy-HEP, and Email-Enron) reported in Tables \ref{Table:LT-table1}-\ref{Table:LT-table2}, we observe that the warm-up strategy is highly effective. It provides the best solution times, and fewer iterations and cuts. For example, DCG-SubWarmup outperforms Greedy by a factor of 2.23  in P2P02 in \tabref{Table:LT-table1}, a factor of 4.05  in Phy-Hep in \tabref{Table:LT-table2}, and a factor of 25  in Email-Enron in \tabref{Table:LT-table2}, the largest dataset considered.


\begin{sidewaystable}[htp]
	\caption{Linear Threshold Model for UCI-message and P2P02} 
	\label{Table:LT-table1}
	\centering 
	\begin{tabular}{ |p{1cm}p{1cm}||p{1.5cm}p{1.5cm}|p{1.5cm}p{1.5cm}|p{1.5cm}||p{1.5cm}p{1.5cm}|p{1.5cm}p{1.5cm}|p{1.5cm}|  }
		\hline
		
		& &
		\multicolumn{5}{c||}{UCI-message} &
		\multicolumn{5}{c|}{P2P02}  \\
		
		\cline{3-12}%
		& 
		& \multicolumn{2}{c|}{ {  DCG-SubIneqs} } & \multicolumn{2}{c|}{ {  DCG-SubWarmup} } & \multicolumn{1}{c||}{ {Greedy} } 
		& \multicolumn{2}{c|}{ {  DCG-SubIneqs} } & \multicolumn{2}{c|}{ {  DCG-SubWarmup} } & \multicolumn{1}{|c|}{ {Greedy} }			
		\\

		{$k$} & { $|\Lambda|$}  
		& {Time(s) }& { Cuts(\#) } & {Time(s) }  & {  Cuts(\#) }&  { Time(s)} 		
		& {Time(s) }& { Cuts(\#) } & {Time(s) }  & { Cuts(\#) }&  { Time(s)} 				
		\\
		
		\hline
		
		1 & 100  & 52& 196 & 32&100&1 & 55& 200 & 17&100&110\\
		2 & 100  & 72& 299 & 43&120&1 & 48& 200 & 26&103&222\\
		3 & 100  & 94& 378 & 70&169&2 & 55& 200 & 28&108&337\\
		4 & 100  & 130& 455 & 93&257&3 & 136& 338 & 28&120&449\\
		5 & 100  & 162& 606 & 89&303&4  & 230& 580 & 31&133&565\\\hline%
		
		1 & 200  & 137& 386 &  61&200&1& 394& 400 &  236&200&107 \\
		2 & 200  & 220& 722 &  91&259&3 & 387& 400 &  202&202&211\\
		3 & 200  & 221& 735 &  80&296&4 & 583& 596 &  225&214&311\\
		4 & 200  & 243& 892 &  161&502&6 & 590& 596 &  204&213&414\\
		5 & 200  & 469& 1435 & 289&893&7 & 1260& 1156 &  306&290&512 \\\hline%
		
		1 & 300  & 192& 582 &  103&300&3 & 1457& 600 &  558&300&624\\
		2 & 300  & 188& 590 &  200&513&5 & 1564& 600 &  584&308&1231\\
		3 & 300  & 192& 594 &  159&449&8 & 1853& 885 &  684&343&1833\\
		4 & 300  & 573& 1507 &  249&637&11 & 3529& 1687 &  855&410&2531\\
		5 & 300  & 543& 1554 & 387&1011&13 & 9557& 4000 &  1656&765&3118  \\ \hline%
		
		1&	500&	493&	962&	76&	500&	3&	1225&	1000&	544&	500&	512\\
		2&	500&	462&	988&	84&	559&	6&	1721&	1000&	576&	503&	1028\\
		3&	500&	747&	1403&	181&	740&	9&	1224&	1000&	534&	512&	1507\\
		4&	500&	665&	1415&	130&	723&	12&	4799&	3308&	600&	552&	1969\\
		5&	500&	1282&	2909&	462&	1933&	15&	10505&	5861&	1047&	887&	2415
		\\\hline%

		\multicolumn{2}{|c||}{\bf Average}  	& 356.85& 930.4 &152&523.2& 5.85 & 2058.6& 1230.35 &447.05&338.15& 1000.3	 \\ \hline

	\end{tabular}
\end{sidewaystable}

\begin{sidewaystable}[htp]
	\caption{Linear Threshold Model for Phy-HEP and Email-Enron} 
	\label{Table:LT-table2}
	\centering 
	\begin{tabular}{ |p{1cm}p{1cm}||p{1.5cm}p{1.5cm}|p{1.5cm}p{1.5cm}|p{1.5cm}||p{1.5cm}p{1.5cm}|p{1.5cm}p{1.5cm}|p{1.5cm}|  }
		\hline
		
		& &
		\multicolumn{5}{c||}{Phy-HEP} &
		\multicolumn{5}{c|}{Email-Enron}  \\
		
		\cline{3-12}%
		& 
		& \multicolumn{2}{c|}{ {  DCG-SubIneqs} } & \multicolumn{2}{c|}{ {  DCG-SubWarmup} } & \multicolumn{1}{c||}{ {Greedy} } 
		& \multicolumn{2}{c|}{ {  DCG-SubIneqs} } & \multicolumn{2}{c|}{ {  DCG-SubWarmup} } & \multicolumn{1}{|c|}{ {Greedy} }			
		\\

		{$k$} & { $|\Lambda|$}  
		& {Time(s) }& { Cuts(\#) } & {Time(s) }  & {  Cuts(\#) }&  { Time(s)} 		
		& {Time(s) }& { Cuts(\#) } & {Time(s) }  & { Cuts(\#) }&  { Time(s)} 				
		\\	
		
		\hline
		
		1 & 100  & 476& 200 & 160&100&587 & 209& 200 & 100&100&2045\\
		2 & 100  & 528& 243 & 175&101&1116 & 206& 200 & 97&100&3972\\
		3 & 100  & 1117& 428 & 333&182&1682 & 196& 200 & 131&106&5814\\
		4 & 100  & 1351& 474 & 383&185&2235 & 414& 379 & 158&137&7605\\
		5 & 100  & 786& 387 & 63&209&2740 & 644& 535 & 299&230&9366 \\\hline%
		
		1 & 200  & 1180& 691 &  261&200&595 & 440& 400 &  200&200&2411\\
		2 & 200  & 1141& 586 &  477&359&1176 & 1240& 400 &  608&200&4732\\
		3 & 200  & 1841& 964 &  540&362&1791 & 419& 400 &  223&215&7001\\
		4 & 200  & 1448& 779 &  602&363&2444 & 642& 593 &  285&265&9192\\
		5 & 200  & 1190& 581 &  722&377&3229 & 902& 760 & 447&378&11330 \\\hline%
		
		1 & 300  & 1212& 898 &  424&300&380 & 773& 600 &  324&300&4822\\
		2 & 300  & 399& 600 &  445&303&692 & 740& 600 &  335&300&9283\\
		3 & 300  & 489& 600 &  516&312&1017 & 666& 600 &  375&326&13602\\
		4 & 300  & 943& 871 &  1089&548&1329& 1082& 886 &  436&365&17833 \\
		5 & 300  & 6425& 3413 & 824&619&1642 & 1975& 1441 & 982&710&22354 \\ \hline%

		1&	500&	546&	1000&	236&	500&	2051&	1243&	1000&	670&	500&	6179\\
		2&	500&	1910&	1218&	1229&	887&	4002&	1199&	1000&	613&	500&	12113\\
		3&	500&	2440&	1421&	1312&	891&	6342&	1192&	1000&	675&	523&	17858\\
		4&	500&	4340&	2283&	1635&	988&	8637&	2446&	1929&	845&	669&	23443\\
		5&	500&	9291&	3845&	2051&	1040&	10993&	3490&	2373&	1044&	844&	28876\\
		\hline%

		\multicolumn{2}{|c||}{\bf Average}  	&1952.65&1074.1 &673.85& 441.3 & 2734 &1005.9&774.8& 442.35 & 348.4 & 10991.55  \\ \hline		
	\end{tabular}
\end{sidewaystable}

Some comments are in order for both the independent cascade and linear threshold models. First, we make some observations on increasing $k$. As can be seen from our experiments, the running time of Greedy increases linearly with $k$. However, the increase in the running time of DCG is nonlinear, as can be expected. Hence, as we increase $k$, we need to set some time limits for both DCG and Greedy (currently, we impose no time limits).  In this case, with DCG, we are still able to obtain an incumbent solution with $k$ seed nodes and an estimate on optimality gap provided by the bound from the DCG master. However, with a time limit, we will have to stop Greedy prematurely, before it identifies all $k$ seed nodes. For example, for the independent cascade model for the  medium-sized instance P2P02, setting a time limit of one hour, for $k=30$ and $|\Lambda|=300$, Greedy stops at time limit with a solution that has $k=4$ seed nodes (see Table \ref{Table:IC-table1}). On the other hand, DCG stops with an incumbent solution that has $k=30$ seed nodes and an  optimality gap of 3.7\%.  For the largest instance Enron, from Table \ref{Table:IC-table2}, we observe that Greedy cannot even find the first seed node in over three hours.  Similarly, as we increase  $|\Lambda|$, the running time of both algorithms increase greatly, and as in the case of increasing $k$ we will have to impose time limits and stop Greedy prematurely to be able to compare the performance of the algorithms.

Our computational experiments demonstrate an overlooked opportunity to use optimization methods to solve stochastic influence maximization problems. For {\it deterministic} submodular maximization  problems,  Greedy has much faster solution times and close to optimal performance  in practice than a delayed constraint generation method based on submodular inequalities \cite[see, for example,][for a deterministic hub location problem]{CF14}. In contrast, for  {\it stochastic} submodular maximization  problems, which are very large scale, Greedy, which guarantees a 0.63-approximate solution,  may be much slower than our proposed method. 

\exclude{
Greedy needs to perform the influence spread function evaluation  for each node for each scenario at each iteration, resulting in an overall running time of $O(nmk|\Lambda|)$. 
However, in our experiments we see that
the delayed constraint generation method generally converges to an optimal solution in fewer than $k$
iterations. Even though each iteration requires the solution of a mixed-integer programming (MIP) master problem, the overall solution time is faster than Greedy, due to the efficiency of the state-of-the-art MIP solvers in solving the master problem. This is one of the major contributions of our paper. } 

Finally, note that, throughout the paper, we present a so-called multicut version of the DCG algorithm and its variants, where we add an optimality cut for each scenario at each iteration. We have also tested a single cut implementation, in which multiple cuts across all scenarios are aggregated into a single cut at each iteration. We observe a degraded performance of the single cut version for our problem instances; therefore, we present our results for the multicut approach. In particular,  for the independent cascade model, the total computational time of the single cut version of DCG-SubWarmup is between  20 to 100\% higher than the multicut version in all four datasets with 300 scenarios for $k>1$. 
 \cite[See page 167 of][for a discussion on the problem-dependent nature of the performance of the single vs.\ multicut approach.]{BL97}

\section{Conclusion} \label{sec:conc}

In this paper, we propose a delayed constraint generation  algorithm  to solve influence maximization problems arising in social networks. We show that exploiting the submodularity of the influence function leads to strong optimality cuts. Furthermore, our computational experiments with large-scale real-world test instances  indicate that the algorithm performs favorably against a popular greedy heuristic for this problem. In most instances, our algorithm finds a solution with provable optimality guarantees more quickly than the greedy heuristic, which can only provide a 0.63 performance guarantee. Our algorithm is applicable to many other variants of  the influence maximization problem for which the influence function is submodular. Furthermore, we generalize the proposed algorithm  to solve any two-stage stochastic program, where the second-stage value function is submodular. 

Our results on optimization-based methods for the fundamental influence maximization problems provide a  foundation to build  algorithms for more advanced models, such as the adaptive model of \cite{SS13},  where a subset of additional seed nodes is selected in the second stage based on the realization of some of the uncertain parameters and the seed nodes selected in the first stage. The  decomposition methods of \cite{Sen10,GKS14} and \cite{ZK14} can be employed in this case to convexify the second stage problems that involve binary decisions. Another possible future research direction is to develop optimization-based methods for the problem of marketing to nodes \cite[]{KKT03,KKT15} to increase their probabilities of getting activated.  

\section*{Acknowledgments}
This work is supported, in part, by the National Science
Foundation Grant 1055668.

\appendix

\section{Alternative Benders Optimality Cuts  for Live-Arc Graph Models} \label{sec:app}

In this section, we present optimality cuts that can be obtained by traditional methods.  First, we give an explicit linear programming (LP) formulation for the subproblems \eqref{model:gensub} used to calculate $\sigma_\omega(x)$ for live-arc graph models such as independent cascade or linear threshold. Observe that the maximum number of nodes reachable from nodes $X$ (corresponding to the decision vector $x$) in graph $G_\omega$ can be formulated as a maximum flow problem an a modified graph $G'_\omega=(V\cup\{s,t\}, A'_\omega)$, where $s$ is the source node, $t$ is the sink node, and $A'_\omega$ includes the arcs $A_\omega$ and arcs $(s,i)$ and $(i,t)$ for all $i\in V$. Let the capacity of the arcs $(i,t)$, $i\in V$ be one, and the capacity of arcs $(i,j)\in A_\omega$ be $n$ (the maximum flow possible on any arc). In addition, we would like the arcs $(s,i), i\in V$ to have a capacity of $n$ if $x_i=1$ and 0 otherwise. Therefore, we let the capacity of arc $(s,i)$ be $nx_i$. The reader might wonder why we  create an arc $(s,i)$ if a node $i$ is not activated. To see why, note that in a two-stage stochastic programming framework, we need to build a second-stage model that is correct for any first-stage decision $x$. It is easy to see that the maximum flow on this  graph is equal to the maximum number of vertices reachable from the seeded nodes $X$. 
The LP formulation of the second-stage problem for scenario $\omega\in \Lambda$ is
\begin{subequations}\label{subproblem}
\begin{align}
\sigma_\omega (x)=~~\max \quad &\sum_{i \in V} y_{si} \\
\quad \text{s.t.}~~& \sum_{j: (j,i) \in A'_\omega} y_{ji} - \sum_{j: (i,j) \in A'_\omega} y_{ij} =0,\qquad  i\in V & (u^\omega_i)\\
& y_{si}\le nx_i, \qquad i\in V & (v^\omega_{si}) \\ 
& y_{ij}\le n, \qquad (i,j)\in A_\omega & (v^\omega_{ij}) \\ 
& y_{it}\le 1, \qquad i\in V & (v^\omega_{it}) \\ 
& y_{ij} \ge 0,\qquad  (i,j)\in A'_\omega,
\end{align}
\end{subequations}
where $y_{ij}$ represents the flow on arc $(i,j)\in A'_\omega$, and the dual variables associated with each constraint are defined in parentheses. Note that the subproblems are feasible for any $\omega\in \Lambda$ and $x\in \{0,1\}^n$ (we can always send zero flows), therefore this problem is said to have  {\it complete recourse}.   The dual of the second-stage problem \eqref{subproblem} is 
\begin{subequations}\label{dual-sub}
\begin{align}
\sigma_\omega (x)=~~\min \quad &\sum_{i \in V} (nx_i v^\omega_{si} +v^\omega_{it}) + \sum_{(i,j) \in A_\omega} nv^\omega_{ij}  \\
\quad \text{s.t.}~~& u^\omega_i+ v^\omega_{si}\ge 1, \qquad i\in V & \\ 
& u^\omega_j-u^\omega_i+ v^\omega_{ij}\ge 0, \qquad (i,j)\in A_\omega & \\ 
& -u^\omega_i+v^\omega_{it}\ge 0, \qquad i\in V &  \\ 
& v^\omega_{ij} \ge 0,\qquad  (i,j)\in A'_\omega.
\end{align}
\end{subequations}

Note that we can write a large-scale mixed-integer program, known as the deterministic equivalent program (DEP), to solve the independent cascade problem. To do this, we create copies of the second-stage variables  $y_{ij}^\omega$ for all $\omega\in \Lambda$, where $y_{ij}^\omega$ represents the flow on arc $(i,j)\in A'_\omega$ under scenario $\omega\in \Lambda$. The DEP is formulated as
\begin{subequations}\label{dep}
\begin{align}
\max~~& \sum_{\omega\in \Lambda} p_\omega \sum_{i \in V} y_{si}^\omega\\
\text{s.t.}~~&\sum_{j\in V} x_j \le k  \\
& \sum_{j: (j,i) \in A'_\omega} y_{ji} - \sum_{j: (i,j) \in A'_\omega} y_{ij} =0,\qquad  i\in V & \\
& y_{si}^\omega\le nx_i, \qquad \omega\in \Lambda, i\in V &  \\ 
& y_{ij}^\omega\le n, \qquad  \omega\in \Lambda, (i,j)\in A_\omega &  \\ 
& y_{it}^\omega\le 1, \qquad  \omega\in \Lambda, i\in V & \\ 
& x\in\{0,1\}^n, y_{ij}^\omega\ge 0, \omega\in \Lambda, (i,j)\in A'_\omega.
\end{align}
\end{subequations}
It is well-established in the stochastic programming field that due to its large size, it is not practical to solve DEP directly. Instead, as is commonly done, we consider the use of Benders decomposition method  \cite[]{B62,VW69} utilizing the structure of this large-scale MIP.

A naive way of generating the optimality cuts is to solve the subproblem \eqref{subproblem} for each $\omega\in \Lambda$  as an LP   (in line \ref{alg:step-solve} of Algorithm \ref{alg:benders})
 to obtain $\sigma_\omega (\bar x)$, and  the corresponding dual vector $(\bar u^\omega,\bar v^\omega)$. Then the optimality cut is 
\begin{equation} 
\label{eq:optcut-lp}
\theta_\omega\le \sum_{i \in V} (nx_i \bar v^\omega_{si} +\bar v^\omega_{it}) + \sum_{(i,j) \in A_\omega} n\bar v^\omega_{ij}. 
\end{equation}
We refer to the optimality cuts \eqref{eq:optcut-lp} obtained by solving the subproblems as an LP as the {\it LP-based optimality cuts}.

Next, we discuss a more efficient way of obtaining the optimality cuts   by utilizing the fact that the subproblems are maximum flow problems, which can be solved in polynomial time using specialized algorithms. 
In particular, for our problem, one only needs to  solve a reachability problem to obtain the corresponding maximum flow. Reachability problem in a graph can be solved in linear time in the number of arcs using breadth- or depth-first search. We describe the equivalence of the maximum flow problem defining the evaluation of the influence spread to the graph reachability problem next. 

For a given  first-stage solution $\bar x$ and the corresponding seed set $\bar X$, let $\hat R(\bar X)\subseteq V$ be the set of nodes in $V$ reachable from $s$, $R(\bar X)=\hat R(\bar X)\setminus \bar X$ be the set of nodes reachable from $s$ not including the seed nodes $\bar X$, and $\bar R(\bar X)=V\setminus \hat R(\bar X)$ be the set of nodes in $V$ {\it not} reachable from $s$ in $G'_\omega$.  From maximum  flow minimum cut theorem \cite[see, e.g.,][]{AMO93}) we can show that a minimum cut is given by $(\hat R(\bar X)\cup\{s\}, \bar R(\bar X) \cup\{t\})$.  
(See the maximum flow formulation of this problem for a given $\bar X$ and scenario $\omega\in \Lambda$  in \figref{fig:mincut}.)  
Let $u^\omega_i=1$ if $i\in \hat R(\bar X)$, and $u^\omega_i=0$, if $i\in \bar R(\bar X)$. In addition, for $(i,j)\in A_\omega'$, let $v^\omega_{ij}=1$ if $i\in \hat R(\bar X)\cup\{s\}$ and $j\in \bar R(\bar X) \cup \{t\}$, otherwise let $v^\omega_{ij}=0$. It is easy to check that this choice of the dual variables is feasible. Furthermore, this choice is optimal. To see this, note that the objective value of the dual is 
\begin{align*}
\sum_{i \in V} (nx_i \bar v^\omega_{si} +\bar v^\omega_{it}) + \sum_{(i,j) \in A_\omega} n\bar v^\omega_{ij} &= \sum_{i \in \bar R(\bar X)} nx_i  +  \sum_{i \in  \hat R(\bar X)} 1 + \sum_{(i,j) \in (\hat R(\bar X),\bar R(\bar X)) } n = |\hat R(\bar X)|,
\end{align*}
because $x_i=0$ for $i\in \bar R(\bar X)$ and there can be no arc $(i,j)\in A_\omega$ with $i\in \hat R(\bar X)$, $j\in \bar R(\bar X)$ (otherwise $j$ would be reachable from $s$ and hence it will be in $\hat R(\bar X)$). Because the optimal objective value of the primal subproblem is $\sigma_\omega (\bar x)=|\hat R(\bar X)|$, this dual solution must be optimal. With this choice of the  optimal dual vector, we obtain the Benders optimality cut
\begin{equation} 
\label{eq:optcut-comb}
\theta_\omega \le \sigma_\omega(\bar x) +  \sum_{i \in \bar R(\bar X)} nx_i. 
\end{equation}
We refer to the optimality cuts \eqref{eq:optcut-comb} obtained by solving the subproblems as   reachability problems as  {\it combinatorial optimality cuts}. \cite{W87} and \cite{W91} use the same type of optimality cuts for a problem of investing in arc capacities of a network to maximize flow under stochastic demands. This problem is a relaxation of our problem in that the first stage variables are continuous. Hence, submodular inequalities cannot be used in their problem context.

Note that inequality \eqref{eq:optcut-comb} can also be seen as a big-M type inequality. For $x=\bar x$, with the associated seed set $\bar X$, we get a correct upper bound on $\theta_\omega$ as $\sigma_\omega(x)$. For any other $x\ne \bar x$, if $x_i=1$ for some $i\in \bar R(\bar X)$, then the upper bound on $\theta_\omega$ given by inequality  \eqref{eq:optcut-comb} is trivially valid, because  $\sigma_\omega(x)\le n$ for any $x\in \{0,1\}^n$. Finally, for any $x\ne \bar x$, if $x_i=0$ for all $i\in \bar R(\bar X)$, then we must have $x_j=0$ for some $j\in \bar X$ and $x_\ell=1$ from some $\ell\in R(\bar X)$. However, because $\ell$ is reachable from $\bar X$, replacing $j$ with $\ell$ will not increase the number of reachable nodes, i.e., $\sigma_\omega(x)\le \sigma_\omega(\bar x)$. Therefore, inequality  \eqref{eq:optcut-comb} is  valid. 

\cite{MW81}  propose a method to strengthen Benders cuts in cases when the dual of the subproblems is  degenerate \cite[see also,][for other enhancements of this method]{P08,SL13}. The method chooses, among alternative dual optimal solutions to the subproblem, one that is not dominated.  While this idea is useful to strengthen the weak Benders cut \eqref{eq:optcut-lp} (in particular, inequality \eqref{eq:optcut-comb} corresponding to one choice of optimal dual solutions), we note that it alone cannot lead to the stronger cuts given by 
the submodular inequalities \eqref{eq:optcut-submod}. To see this note, first, that all extreme points of the dual subproblem \eqref{dual-sub} are integral. So any non-integral dual feasible solution is a convex combination of these extreme points. Then note that an optimality cut of the form \eqref{eq:optcut-lp} obtained from the dual is non-dominated only if the corresponding dual solution is an extreme point (otherwise the optimality cut would be a convex combination of the optimality cuts corresponding to the extreme points). As a result, $ \bar v^\omega_{si}\in \mathbb Z$ for all $i\in V$, hence submodular inequalities \eqref{eq:optcut-submod} cannot be expressed as inequalities \eqref{eq:optcut-lp} obtained from non-dominated  extreme point optimal dual solutions to the subproblem \eqref{dual-sub}. 

\begin{figure}[htb]
\centering
	\includegraphics[width=7.5cm]{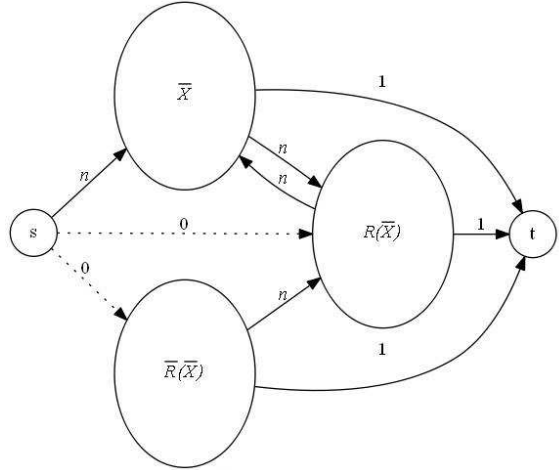}
\caption{Maximum flow  formulation of the influence function.}\label{fig:mincut}
\end{figure}

Finally, note that  because the first-stage problem is a pure binary optimization problem, one can also consider the optimality cuts proposed in the  integer L-shaped method of \cite{LL93}. The resulting inequality, for $\omega\in \Lambda$ and a given $\bar x$, with an associated seed set $\bar X$, is
\begin{equation} \label{eq:optcut-ll}
\theta_\omega \le \sigma_\omega(\bar x) +  \sum_{i \in V\setminus \bar X} (n-\sigma_\omega(\bar x))x_i. 
\end{equation}
This inequality can be strengthened by the same observation  that replacing a node $j\in \bar X$ with  a node $\ell\in R(\bar X)$ does not increase the number of reachable nodes. Therefore, we can reduce the coefficient of $x_\ell$ in inequality \eqref{eq:optcut-ll} to obtain a strengthened version of the integer L-shaped optimality cut \eqref{eq:optcut-ll}: 
\begin{equation} \label{eq:optcut-llst}
\theta_\omega \le \sigma_\omega(\bar x) +  \sum_{i \in \bar R(\bar X)} (n-\sigma_\omega(\bar x))x_i, 
\end{equation}
which is clearly valid.  We refer to inequalities \eqref {eq:optcut-llst} as the {\it strengthened integer L-shaped optimality cuts}.

\begin{proposition}
The submodular optimality cuts \eqref{eq:optcut-submod}  dominate the combinatorial optimality cuts  \eqref{eq:optcut-llst}. 
\end{proposition}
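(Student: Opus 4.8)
The plan is to show that, once both families are instantiated at the same seed set $S=\bar X$, the submodular cut \eqref{eq:optcut-submod} has a pointwise smaller right-hand side than \eqref{eq:optcut-llst} over the nonnegative orthant, so that it implies \eqref{eq:optcut-llst} and is therefore at least as strong. First I would record that both inequalities are upper bounds on $\theta_\omega$ of the common form $\theta_\omega\le \sigma_\omega(\bar X)+\sum_{j\in \bar R(\bar X)} \gamma_j x_j$: they share the same constant term $\sigma_\omega(\bar X)=\sigma_\omega(\bar x)$ and, crucially, the same support $\bar R(\bar X)$ (the nodes not reachable from $\bar X$ in $G_\omega$), with coefficients $\gamma_j=r_j^\omega(\bar X)$ in \eqref{eq:optcut-submod} and $\gamma_j=n-\sigma_\omega(\bar x)$ in \eqref{eq:optcut-llst}. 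Since every variable satisfies $x_j\ge 0$, domination then reduces to the purely combinatorial claim that $r_j^\omega(\bar X)\le n-\sigma_\omega(\bar x)$ for each $j\in \bar R(\bar X)$.

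The key step is to read both coefficients through the live-arc graph reachability. Recall that $\sigma_\omega(\bar x)=|\hat R(\bar X)|$ is the number of nodes reachable from $\bar X$ (including the seeds), so that $n-\sigma_\omega(\bar x)=|\bar R(\bar X)|$ is exactly the number of nodes that are \emph{not} reachable from $\bar X$. On the other hand, by definition $r_j^\omega(\bar X)$ counts the nodes reachable from $j$ (including $j$) that are not reachable from any node of $\bar X$; every such node therefore lies in $\bar R(\bar X)$. Hence the set counted by $r_j^\omega(\bar X)$ is a subset of $\bar R(\bar X)$, which gives $r_j^\omega(\bar X)\le |\bar R(\bar X)|=n-\sigma_\omega(\bar x)$, precisely the bound needed.

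Combining the two observations, for every $x\in\{0,1\}^n$ (indeed every $x\ge 0$) the right-hand side of \eqref{eq:optcut-submod} is at most that of \eqref{eq:optcut-llst}, so any $(\theta_\omega,x)$ satisfying \eqref{eq:optcut-submod} also satisfies \eqref{eq:optcut-llst}; this is the asserted domination. I expect essentially no technical obstacle here, since the validity of both cuts is already in hand (the submodular cut from the proposition establishing \eqref{eq:optcut-submod}, and the strengthened integer L-shaped cut in this appendix). The only point that genuinely requires care is confirming that the two cuts have identical constant terms and identical supports $\bar R(\bar X)$, after which the argument collapses to the one-line observation that the nodes counted by $r_j^\omega(\bar X)$ form a subset of $\bar R(\bar X)$. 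The same comparison simultaneously yields domination over the combinatorial cut \eqref{eq:optcut-comb}, whose coefficients $n$ are even larger than $n-\sigma_\omega(\bar x)$.
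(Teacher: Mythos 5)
Your proof is correct and follows essentially the same route as the paper: the paper's one-line argument is precisely the coefficient comparison $r_j^\omega(\bar X)\le n-\sigma_\omega(\bar x)$ for $j\in \bar R(\bar X)$, which you justify by noting that the nodes counted by $r_j^\omega(\bar X)$ form a subset of $\bar R(\bar X)$ and $|\bar R(\bar X)|=n-\sigma_\omega(\bar x)$. Your write-up simply makes explicit the shared constant term and support, and the subset argument, that the paper leaves implicit.
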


\begin{proof}
This follows because $r_j^\omega(S)\le n-\sigma_\omega(\bar x)$ for any $j\in \bar R(S)$. 
\end{proof}

\subsection{Computations with Benders using strengthened L-shaped cuts} \label{sec:lc-comp}

In our computational study in Section \ref{sec:hepnet}, we set  $\pi_{ij}=p=0.1, (i,j)\in A$ in the real world network. Because the influence probability $p$ is very small, the live-arc graphs corresponding to each scenario are  large-scale sparse networks. We were not able to solve even the smallest instances (with $k=1$ and $|\Lambda|=50$) using Benders-LC  after one day. To demonstrate the inefficiency of Benders-LC,  we consider a much smaller subset of the sparse HEPT network under one scenario,  depicted in \figref{Fig:L} with 15 nodes and 4 directed arcs, and compare the performance of DCG-SubIneqs and Benders-LC. In other words, we let $p=1$, which leads to a deterministic problem (i.e., a unique scenario with objective $\theta_1$ and $p_1=1$). We vary the value of $k$ from 1 to 5. 

\begin{figure}[htbp]
	\centering	
	\includegraphics[width=10.5cm]{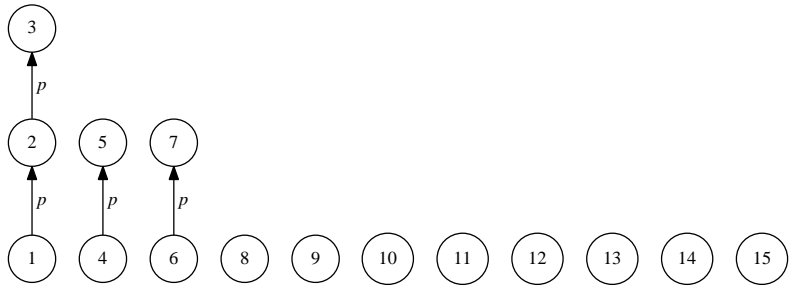}
	\caption{Sparse Network with 15 nodes and 4 arcs with equal  influence probabilities $p$.}
	\label{Fig:L} 
\end{figure}

The total number of user cuts added to the corresponding master problem  is shown in \tabref{Table:L}. We observe that compared to DCG-SubIneqs the number of user cuts added to the master problem of Benders-LC grows rapidly as the number of seed nodes $k$ increases. Indeed,  the number of user cuts for Benders-LC approached  $\binom {15}k$,  indicating that  Benders-LC is effectively a pure enumeration algorithm for this problem. The strengthened integer L-shaped  optimality cuts \eqref{eq:optcut-llst} do not provide any useful information on the objective value when the solution is different from the one that generates the cut. In contrast, submodular inequalities are highly effective for this set of problems.  To see why, consider the problem of finding $k=1$ seed node. The master problem of both DCG-SubIneqs and Benders-LC selects $k=1$ node arbitrarily, because they do not have any cut at the beginning. Because the sparse network is constituted of many singleton nodes (with no incoming and outgoing arcs), there is a high probability  that the master problem selects one singleton at the first iteration. Suppose that the master problem chooses  node 15, which was also the choice of  CPLEX. DCG-SubIneqs generates the cut $$\theta_1 \le 1+ 3x_1 + 2x_2 + x_3 + 2x_4 + x_5 + 2x_6 + x_7 + x_8 + x_9 + x_{10} + x_{11} + x_{12} + x_{13} + x_{14},$$ and Benders-LC generates the cut $$\theta_1 \le 1+ \sum\limits_{i=1}^{14} 14x_i,$$ to be added to the corresponding master problem. At the second iteration, due to the use of the stronger optimality cut, DCG-SubIneqs chooses  node 1 and reaches  optimality, but Benders-LC chooses one of the 14 nodes arbitrarily. Note that, in the worst case, Benders-LC  traces all 15 nodes in the network (and generates 15 optimality cuts) before reaching the optimal solution. Therefore, in the  large-scale network of Section \ref{sec:hepnet}, Benders-LC fails due to the need for a large number of iterations and computational time.  In contrast, the submodular inequality guides the master problem to choose nodes with higher marginal influence.

\begin{table}[htb]
	\caption{Comparison of DCG-SubIneqs and Benders-LC.}
	\label{Table:L}	
	\begin{center}
		\begin{tabular}{|c|c|c|c|c|c|} \hline
						& \multicolumn{5}{|c|}{Number of user cuts with different $k$}  \\
						\cline{2-6} %
			Algorithm & $k=1$ & $k=2$ & $k=3$ & $k=4$ & $k=5$ \\
			\hline
			DCG-SubIneqs & 2 & 5 & 11 & 16 & 51 \\
			Benders-LC		 & 15 & 106 & 458  & 1365 & 3003	 \\\hline%
		\end{tabular}
		
	\end{center}
\end{table}

\bibliographystyle{apalike}
\bibliography{ref_social}

\end{document}